\documentclass[12pt,english]{article}
\usepackage{amsfonts}
\usepackage[T1]{fontenc}
\usepackage{geometry}
\usepackage{color}
\usepackage{babel}
\usepackage{amsmath}
\usepackage{amssymb}
\usepackage{bm}
\usepackage{graphicx}
\usepackage{setspace}
\usepackage{esint}
\usepackage[
style=authoryear,
sorting=nyt,
sortcites=true,
maxcitenames=9,
maxbibnames=9,
backend=biber,
natbib,
citestyle=authoryear
]{biblatex}

\usepackage[unicode=true,pdfusetitle,
 bookmarks=true,bookmarksnumbered=true,bookmarksopen=false,
 breaklinks=false,pdfborder={0 0 0},backref=false,colorlinks=true]{hyperref}
\usepackage{aecompl}
\usepackage{geometry}
\usepackage{babel}
\usepackage{breakurl}
\usepackage{amsthm}
\usepackage{breakurl}
\usepackage{paralist}
\usepackage{rotating}
\usepackage{tocloft}
\usepackage{xcolor}
\usepackage{bigints}
\usepackage[nottoc]{tocbibind}
\usepackage{bookmark}
\usepackage{lscape}
\usepackage{cleveref}
\usepackage{tikz}
\usepackage{pgfplots}
\usepackage{subcaption}
\usepackage{caption}
\usepackage{gnuplottex}
\usepackage{lmodern}
\usepackage[inline]{asymptote}
\usepackage{appendix}

\hypersetup{
 linkcolor=darkred, citecolor=darkblue}
\makeatletter
\colorlet{darkred}{red!70!black}
\colorlet{darkblue}{blue!70!black}
\numberwithin{equation}{section}
\numberwithin{figure}{section}
\theoremstyle{plain}
\newtheorem{theorem}{Theorem}

\newtheorem{conjecture}{Conjecture}
\newtheorem{corollary}{Corollary}

\newtheorem{lemma}{\protect\lemmaname}
\newtheorem*{lemma*}{Lemma}

\newtheorem{definition}{Definition}
\theoremstyle{definition}
\newtheorem{remark}{Remark}
\newtheorem{example}{Example}
\usepackage{amsthm}
\usepackage{dcolumn}
\usepackage{etoc}
\usepackage{titlesec}

\newtheorem{assumption*}[]{Assumption}

\newdimen\dummy
\dummy=\oddsidemargin
\allowdisplaybreaks

  \providecommand{\lemmaname}{Lemma}

\providecommand{\propositionname}{Proposition}

\DeclareMathOperator*{\rank}{rank}

\DeclareMathOperator*{\supp}{supp}

\makeatother
\pgfplotsset{compat=1.17}

\title{Self-Confirming Mechanisms}

\author{Zhiming Feng\footnote{zf2295@columbia.edu, Department of Economics, Columbia University. } \quad \quad Qingmin Liu\footnote{ql2177@columbia.edu, Department of Economics, Columbia University.}}
\date{Juanary 17, 2026}

\title{Self-Confirming Mechanisms}

\author{Zhiming Feng\footnote{zf2295@columbia.edu, Department of Economics, Columbia University. }
\quad\quad
Qingmin Liu\footnote{ql2177@columbia.edu, Department of Economics, Columbia University.}}

\date{\today}

\begin{document}

\maketitle



\maketitle
\begin{abstract}
    This paper studies mechanism design environments in which the designer does not know the distribution of agents’ private information a priori and instead learns from agents' behavior induced by the mechanism itself. 
      We formalize a notion of self-confirming mechanisms and a refinement thereof, capturing the idea that an equilibrium mechanism is optimal given the designer’s belief and that this belief is consistent with the information produced by the mechanism. We establish a fictitious revelation principle, showing that any incentive-compatible mechanism can be represented as a direct mechanism with filtered type reports that preserve the original mechanism’s informational content.
Applying the framework to a monopoly problem, we show that, subject to an equilibrium refinement, dominant-strategy self-confirming mechanisms are exactly posted-price mechanisms with locally revenue-maximizing prices. 

\end{abstract}

\pagenumbering{gobble}

\newpage
\pagenumbering{gobble}
{\footnotesize
\tableofcontents
}

\newpage

\pagenumbering{arabic}
\setcounter{page}{1}

\section{Introduction}

Mechanisms and institutions are not always the product of purely optimal normative design. Rather, they emerge from complex strategic interactions. The operation of a mechanism generates data through the feedback of economic agents within it, and this feedback can trigger redesign and modification. Some mechanisms are unstable under such forces; others persist---or become trapped in seemingly suboptimal states. We develop a model of data-generating mechanisms and the endogenous forces driving redesign, in which a mechanism itself persists as an equilibrium.

A very natural environment in which such forces arise is one in which neither agents nor the designer know the distribution of the underlying uncertainty, leaving scope for learning and re-optimization. For example, a seller may sell goods to buyers whose valuations are privately known, while neither the seller nor the buyers themselves have precise knowledge of the prior distribution of valuations. Similarly, in the design of political institutions---such as voting rules, legislative procedures, or other mechanisms for aggregating public input---the distribution of citizens’ preferences may be unknown to the political designer and only imperfectly understood by citizens themselves.
In such settings, uncertainty on the agents’ side is onerous but often not fatal, at least in theory: by focusing on dominant-strategy or ex post incentive-compatible mechanisms, agents’ behavior can be rendered independent of beliefs about others’ type distributions. By contrast, uncertainty on the designer’s side---about the distribution of agents’ private information or preferences---is unavoidable and plays a central role.\footnote{Our claim concerns the more basic question of how agents interact in a mechanism when they do not know the distributions of each other's types, and how the designer evaluates such mechanisms. \citet{chung2007foundations}, by contrast, take a different perspective, asking whether and when it makes sense for a revenue-maximizing auctioneer to impose dominant-strategy incentive compatibility in the first place; see also \citet{bergemann2005robust} for related discussion of ex post incentive compatibility. These papers are frontrunners of a distinct strand of the robust mechanism design literature with objectives different from those of our paper.}

One natural response in the computer science and economic theory literature is prior-free mechanism design.\footnote{A foundational insight that initiated the analysis of this environment originates in \citet{wilson1987game}.} 
This literature dispenses with detailed probabilistic assumptions and instead seeks mechanisms that perform well uniformly across a class of environments, typically evaluated according to conservative worst-case criteria or related variants; see, e.g., \citet{chassang2013calibrated} and \citet{carroll2019robustness}. The predictions often depend on how the exogenous ambiguity set is specified. In particular, information revealed through the operation of the mechanism is typically treated as irrelevant. By contrast, learning and experimentation, especially in the face of such uncertainty, are pervasive in practice. This observation motivates the present research agenda.


A natural approach to learning in this context is to model mechanism design dynamically, allowing the designer to infer the distribution of agents’ types or preferences over time. Indeed, a growing literature---particularly in computer science---studies how designers can learn such distributions from observed behavior. The exact assumptions on feasible mechanisms and game forms matter in these dynamic problems (see e.g., \citet{NekipelovSyrgkanisTardos2015}, \citet{MorgensternRoughgarden2016} and \citet{kanoria2021incentive}). We propose a new, reduced-form approach. Rather than modeling learning explicitly as a dynamic or statistical process, we view mechanism design as a fixed-point problem: mechanisms generate information through their operation, and this information in turn constrains which mechanisms can arise in equilibrium.

To see the intuition, suppose a designer implements a mechanism that maps agents' reported messages into outcomes. Agents respond optimally by sending messages according to their strategies. The designer then observes the empirical distribution of messages generated by these responses (one can interpret this empirical distribution as the long-run distribution induced by repeatedly implementing the mechanism against a stationary population of agents). This empirical distribution is not arbitrary: it reflects both the true, unobserved distribution of agents' types and the structure of the mechanism itself. Changing either the underlying distribution or the mechanism generally alters the distribution of messages.
By observing this empirical distribution, the designer learns something about the underlying distribution of types, which she does not observe directly. The key question is whether this information creates an incentive for the designer to revise the mechanism. If it does not---if the designer finds it optimal to continue using the same mechanism given the information revealed by agents' responses---then the mechanism can be viewed as being in equilibrium. In this sense, we are interested in \textit{equilibrium mechanisms}: mechanisms that are stable with respect to the information they themselves generate. Our  interest is therefore not in which mechanisms are ``optimal'', but in which mechanisms tend to persist as an equilibrium phenomenon.

While this idea is natural, formalizing it requires clarifying several conceptual issues and taking positions on the appropriate postulates.

The first issue concerns what the designer learns from running a mechanism. Fix a mechanism that maps reported message profiles into outcomes, and fix a profile of agents’ strategies that maps types into messages. Together with a prior distribution over types, these objects induce a distribution over message profiles (which we assume is empirically observable, abstracting from statistical issues such as sampling noise).  When the designer observes this distribution and, as part of the equilibrium postulate, knows agents’ strategies, she can infer only that the true distribution of types lies in a set of distributions that are observationally indistinguishable given the mechanism and the strategies. This set captures exactly the information learned by the designer. Without further assumptions, no additional inference is possible. Mechanisms can therefore be naturally compared by the informativeness of the sets they induce.


The second conceptual issue concerns the universal set of mechanisms. Mechanisms and agents’ strategies are defined as in the standard mechanism design paradigm. Unlike in the classical framework, however, messages here also serve as a source of information from which the designer draws inferences about the distribution of agents’ types. This distinction has an immediate implication: the standard revelation principle is no longer sufficient to recover all relevant mechanisms. Direct mechanisms with truth-telling strategies reveal the true distribution of types and therefore fail to capture mechanisms that deliberately pool information---for example, mechanisms with degenerate message spaces or strategies designed to conceal information.



Of course, a reader may ask why we should be interested in such indirect mechanisms at all. It is easy to imagine practical situations in which a buyer whose type would surely be excluded from receiving the item chooses not to participate in the sale mechanism. In such settings, a direct mechanism---best understood in standard mechanism design theory as an analytical device for recovering all mechanisms---cannot always serve as a practical instrument for eliciting information. Instead, indirect mechanisms such as posted prices are widely used in practice, and these mechanisms do not fully reveal agents’ information. It is therefore natural to ask whether and why such mechanisms persist.


Setting aside real-world considerations, a natural theoretical question arises. Although restricting attention to direct mechanisms is clearly not without loss of generality once information and inference matter, is it nevertheless without loss of optimality? In particular, assuming full information revelation from a direct mechanism, could the designer simply implement the usual optimal direct mechanism---given her preferences---as if the true distribution were known? These questions require careful discussion. 
The first question is not well posed. We are not studying optimal mechanisms in the conventional sense, since optimality is ill-defined without specifying assumptions about the designer's information. Instead, we study equilibrium mechanisms that respond to the information learned from agents’ responses and remain stable given that information. As in standard non-cooperative games with multiple equilibria, one cannot simply choose an equilibrium most favorable to a particular player---here, the designer. The answer to the second question is a qualified yes. The optimal direct mechanism under truthful revelation can be an equilibrium (to be defined), but it is only one among potentially many equilibria.

From this perspective, in a classic buyer–seller problem, a natural and practically relevant question is the following: is a posted-price mechanism---one that reveals only a quantile of the underlying type distribution---an equilibrium mechanism? If so, what prices would emerge as equilibria? This is a well-posed question of clear practical relevance. More broadly, there are many possible mechanisms. Which of them can arise as equilibrium mechanisms, and how can they be characterized? Addressing these questions requires a revelation principle that preserves not only outcome mappings, but also informational content.

The third issue concerns how to define an equilibrium. Given that the information endogenously generated by a mechanism is captured by a set of type distributions consistent with the empirical distribution of messages, what does it mean for the designer to have a profitable deviation from that mechanism? One natural response is to adopt maxmin or regret-based criteria, in the spirit of robust design, with respect to this set. Such approaches can indeed be formulated within our framework; however, combining the worst-case criteria with endogenously learned information and equilibrium reasoning requires conceptual justification, even if it poses no operational difficulty. We instead take a Bayesian approach. We say that a mechanism is an equilibrium if it is optimal for the designer under some belief in this set. This corresponds to the notion of self-confirming equilibrium: the mechanism is sustained by at least one belief compatible with observed behavior.


A fourth, and more substantive, issue is that the equilibrium concept can be permissive, since a wide range of mechanisms may be self-confirming given the many type distributions that are observationally equivalent under these mechanisms.
Naturally, the equilibrium approach allows us to introduce equilibrium refinements, following the classical tradition in game theory. We say that a mechanism is robustly self-confirming if it remains self-confirming even when a small amount of exogenous information about the true distribution is revealed. 
This information leakage or ``grain of truth'' is modeled as revealing the distribution over a set with small probability under the underlying true distribution. Accordingly, the term ``robustness'' is used here in its classical sense---stability with respect to ``perturbations''---rather than in the more recent usage in robust mechanism design. A robust equilibrium therefore asks whether a mechanism remains stable once even minimal truthful information leaks out.

With these issues clarified, the scope and limits of our framework should now be clearer. We substantiate this with two types of results. Our first result is a new revelation principle, which we call the fictitious revelation principle. We show that any incentive compatible mechanism is equivalent to a fictitious direct mechanism in which agents report types, but these reports are deliberately filtered to add noise. Interestingly, Blackwell garbling is not necessarily a defining feature of this filter. This construction allows us to represent all mechanisms---direct or indirect---within a unified framework that preserves both outcome mappings and informational content.

Our second result applies the framework to the classic monopoly problem. We characterize dominant-strategy robustly self-confirming mechanisms and show that they are precisely posted-price mechanisms in which the price is a local maximizer of revenue. The economic message is clear: it identifies which suboptimal mechanisms can persist. 
When the buyer's distribution satisfies the monotone virtual value condition, there is a unique local maximizer, and the robustly self-confirming mechanism coincides with the Myerson optimal mechanism. In this sense, the Myerson mechanism emerges as the unique equilibrium mechanism once informational feedback is taken seriously.

Equilibrium mechanisms arise in environments where the designer faces uncertainty about the distribution of the underlying uncertainty. Other classical settings can also naturally lead to mechanisms as equilibrium outcomes, though for fundamentally different reasons. \citet{Myerson1983} studies the informed mechanism design problem, in which mechanisms may reveal the designer’s private information. \citet{holmstrom1983efficient} show that informed agents can collectively overturn incentive-compatible mechanisms and define durable decision rules that are immune to such coalitional deviations. Each paper has generated a substantial literature, which we do not attempt to review here due to space constraints. \citet{vartiainen2013auction} studies mechanism design without commitment. His insight is to decompose a mechanism into an information-processing device that maps messages into a public signal and an implementation device that determines the outcome conditional on that signal. Because the seller cannot commit to implementation, she may redesign the mechanism; a mechanism is credible only if it is not ex-post dominated by the mechanism the seller would choose under the posterior induced by any signal. Unlike our setting, his framework explicitly models belief updating from realized signals with a known prior. 
 
In the vein of robust mechanism design, \citet{kambhampati2024proper} proposes a notion of proper robustness that refines the criterion through an iterated application of the worst-case operator---or a leximin procedure---yielding sharper predictions. By contrast, \citet{BorgersLiWang2025} analyze mechanisms that are undominated across types, which is belief-free and can be viewed as a relaxation of the overly conservative maximin criterion. Their frameworks do not incorporate equilibrium considerations or endogenous inference, which are central to our approach. 
By duality, a self-confirming mechanism---one that is justified by some belief---is undominated with respect to the associated set of beliefs under suitable topological assumptions. Aside from the endogeneity of beliefs, this dual notion of undominance differs from that of \citet{BorgersLiWang2025}. 

In game-theoretic settings, self-confirming equilibrium concerns players’ conjectures about others’ strategies that are not contradicted by observations, see, for example, \citet{fudenberg1993self},  \citet{rubinstein1994rationalizable}. 
and \citet{DekelFudenbergLevine2004}. 
Our setting differs in that the designer’s mechanism choices determine the empirical distribution of messages, which in turn shapes her beliefs about types. Nevertheless, the underlying idea is closely related, which motivates our use of the term self-confirming mechanism.
The idea of learning from observed outcomes and using the resulting information to define a solution concept in a fixed-point fashion is reminiscent of the notion of stability in \citet{liu2020stability}.

The rest of the paper is organized as follows.
\Cref{sec:mechanisms} introduces the general mechanism design environment.
\Cref{sec:data} formalizes the information generated by mechanisms and defines the set of feasible priors consistent with observed behavior.
\Cref{sec:fictitious} and \Cref{sec:fictitiousrp} introduce fictitious direct mechanisms and establishes the fictitious revelation principle, showing that any incentive-compatible mechanism can be represented in a way that preserves both allocations and informational content.
\Cref{sec:selfconfirming} defines self-confirming mechanisms and studies their basic properties.
\Cref{sec:grain} introduces the ``grain of truth'' refinement and the notion of robust self-confirmation.
\Cref{sec:application} applies the framework to the single-seller--single-buyer problem and characterizes robustly self-confirming mechanisms.
All omitted proofs and additional technical results are collected in the Appendix.

\section{Basics}\label{sec:mechanisms}

Consider a finite set of agents \( N = \{1,2,\dots,n\} \). Each agent 
\( i \in N \) has a private type \( \theta^i \in \Theta^i \), and the type space is 
\( \Theta = \prod_{i \in N} \Theta^i \). A type profile is denoted by 
\( \theta = (\theta^1,\theta^2,\dots,\theta^n) \in \Theta \). 
Let \( O \) be the set of outcomes.

Each agent \( i \in N \) has a payoff function 
\( u^i : O \times \Theta \to \mathbb{R} \). 
The mechanism designer is modeled as player \( 0 \), with payoff function 
\( u^0 : O \times \Theta \to \mathbb{R} \). 
These payoff functions are extended in the standard expected-utility fashion 
to probability measures on \( O \times \Theta \); with a slight abuse of notation, 
we continue to denote the extended functions by \( u^i \) and \( u^0 \). 

\begin{remark}
In many practical applications, the designer's payoff does not directly depend on agents' types. In settings where it does, the designer could in principle make inferences about these types from the realized payoff. A strand of the literature abstracts from such inference by assuming that payoffs are not observed instantaneously. For ease of interpretation in applications, a reader can simply assume that \(u^0\) does not depend on \(\Theta\). 
\end{remark}

A \textbf{mechanism} \( \mathcal{M} = (M,\omega) \) consists of a message space 
\( M = \prod_{i \in N} M^i \), where \( M^i \) is the set of messages available 
to agent \( i \), and an outcome function 
\( \omega : M \to \Delta(O) \) 
that assigns to each message profile 
\( m = (m^1,m^2,\dots,m^n) \) 
a probability measure over outcomes.

\begin{remark}[Topology and Measurability]
We assume that all sets are standard Borel. Although the formulation is general with 
respect to type and message spaces, certain results distinguish between two 
cases. In the first, for each \( i \), both \( \Theta^i \) and \( M^i \) are 
uncountable standard Borel spaces. In the second, for each \( i \), both 
\( \Theta^i \) and \( M^i \) are finite.
\end{remark}

\subsection{Mathematical Notation}





For any stochastic mapping \( g : Y \to \Delta(Z) \) and any probability 
measure \( \mu \in \Delta(Y) \), we define \( g_{\mu} \in \Delta(Z) \) as 
the \textbf{average} of \( g \) with respect to \( \mu \), given by
\begin{equation}\label{eq:average}
    g_{\mu}(E) := \int_Y g(y)(E)\, \mu(dy)
\end{equation}
for any measurable set \( E \subseteq Z \).

For any two stochastic mappings \( h : X \to \Delta(Y) \) and 
\( g : Y \to \Delta(Z) \), we define their \textbf{compound} 
\( g \circ h : X \to \Delta(Z) \) by
\begin{equation}\label{eq:compound1}
    (g \circ h)(x) := g_{h(x)}
\end{equation}
for all \( x \in X \). Equivalently, using \eqref{eq:average},
\begin{equation}\label{eq:compound2}
    (g \circ h)(x)(E) 
    = \int_Y g(y)(E)\, h(x)(dy)
\end{equation}
for all \( x \in X \) and measurable set \( E \subseteq Z \).

For a collection of functions \( h^i : X^i \to \Delta(Y^i) \), 
\( i = 1,\dots,n \), define their joint function
\begin{equation}\label{eq:joint-function}
h(x) := ( h^1(x^1), \dots, h^n(x^n) ),    
\end{equation}
for all \( x = (x^1,\dots,x^n) \in X = \prod_{i \in N} X^i \).
Thus, \( h \) may be viewed as a function from 
\( X\) to \(\Delta(Y)=\Delta(\prod_{i \in N}Y^i) \supseteq \prod_{i \in N} \Delta(Y^i)\).


\subsection{Mechanism-Induced Game}

A mechanism \( \mathcal{M} \) and a profile of payoff functions 
\( u = (u^1,\dots,u^n) \) induce a game \( (\mathcal{M},u) \) among the 
\( n \) agents. A \textbf{strategy} of agent \( i \) in this game is a 
function \( \sigma^i : \Theta^i \to \Delta(M^i) \).

As in the standard interpretation of mechanism design, each agent 
\( i \in N \) sends a message \( m^i \in M^i \), possibly at random, 
according to \( \sigma^i \). The realized message profile 
\( m = (m^1,\dots,m^n) \in M \) is mapped by the outcome function $\omega$ to a 
probability measure over outcomes, \( \omega(m) \in \Delta(O) \).

In accordance with the joint function notation in \eqref{eq:joint-function}, a strategy profile  \( \sigma = (\sigma^1,\dots,\sigma^n) \) can be regarded as  
 a stochastic mapping \( \sigma : \Theta \to \Delta(M) \).
The compound of \( \sigma \) and the outcome function 
\( \omega : M \to \Delta(O) \) is therefore a stochastic mapping 
\( \omega \circ \sigma : \Theta \to \Delta(O) \), 
which assigns to each type profile a probability measure over outcomes.


With a slight abuse of notation, we write 
\( u^i(\omega \circ \sigma, \theta) \) in place of 
\( u^i((\omega \circ \sigma)(\theta), \theta) \). A strategy profile \( \sigma = (\sigma^1,\dots,\sigma^n) \) is an 
\textbf{ex post equilibrium} of the induced game if
\[
u^i(\omega \circ \sigma, \theta)
\;\ge\;
u^i(\omega \circ (\tilde{\sigma}^i,\sigma^{-i}), \theta)
\]
for all \( i \in N \),  \( \theta \in \Theta \), 
and  \( \tilde{\sigma}^i : \Theta^i \to \Delta(M^i) \).

We say that \( \sigma^i \) is a \textbf{dominant strategy}, or an 
``always best response,'' for agent \( i \) in the induced game if
\[
u^i(\omega\circ (\sigma^i,\tilde{\sigma}^{-i}), \theta)\geq u^i(\omega \circ\tilde{\sigma}, \theta)
\]
for all \( \theta \in \Theta \) and all strategy profiles 
\( \tilde{\sigma} = (\tilde{\sigma}^1,\dots,\tilde{\sigma}^n) \).

A strategy profile \( \sigma = (\sigma^1,\dots,\sigma^n) \) is a 
\textbf{dominant strategy equilibrium} of the induced game if 
\( \sigma^i \) is a dominant strategy for every \( i \in N \).

\begin{definition}
    We call a mechanism-strategy pair \( (\mathcal{M}, \sigma)  \) an \textbf{augmented mechanism}. An augmented mechanism is (ex post or dominant strategy) \textbf{incentive compatible} if \( \sigma \) constitutes an (ex post or dominant strategy) equilibrium of the game induced by the mechanism $\mathcal{M}$.

\end{definition}

\subsection{Mechanism-Generated Data and Inference}\label{sec:data}

Given any probability measure $\pi \in \Delta(\Theta)$ and any strategy profile $\sigma$, 
the probability measure induced by $\pi$ and $\sigma$ over the message space $M$ 
is denoted by $\sigma_{\pi}$, following the notation in \eqref{eq:average}.

Assume that nature chooses agents' types according to a prior 
$\pi_0 \in \Delta(\Theta)$, which we refer to as the \textbf{true prior}. 
We assume that no one---especially the designer---observes $\pi_0$. 
Instead, they observe the \textbf{empirical distribution} of message profiles 
induced by the strategy profile $\sigma$ and the prior $\pi_0$, namely 
$\sigma_{\pi_0} \in \Delta(M)$. 
Notice that $\sigma_{\pi_0}$ is a $\sigma$-garbling of $\pi_0$.

The basic premise is that the designer knows the agents' strategy profile 
$\sigma$ and observes the empirical distribution of messages 
$\sigma_{\pi_0}$, from which she aims to infer $\pi_0$. 
However, given $\sigma$, there may exist multiple 
$\pi \in \Delta(\Theta)$ such that $\sigma_\pi = \sigma_{\pi_0}$, 
all of which are observationally indistinguishable from the true prior $\pi_0$.  

We write $\Pi^{\mathcal{M}}(\sigma, \sigma_{\pi_0}) \subseteq \Delta(\Theta)$ 
for the set of such probability measures, which we term \textbf{feasible priors}.
Formally,
\begin{equation}\label{eq:feasible-priors}
   \Pi^{\mathcal{M}}(\sigma,\sigma_{\pi_0}) :=  \left\{\pi \in \Delta(\Theta): \sigma_{\pi}=\sigma_{\pi_0} \right\} .
\end{equation}
It is a convex subset of $\Delta(\Theta).$ Equivalently, $\Pi^{\mathcal{M}}(\sigma,\sigma_{\pi_0})$ contains all 
$\pi \in \Delta(\Theta)$ such that $\sigma_{\pi_0}$ is a $\sigma$-garbling of $\pi$.

\begin{remark}
    There are natural questions about why the designer is assumed to observe the empirical distribution of messages. As explained in the introduction, we may view the problem as one in which the designer faces generations of agents whose types are drawn from a common distribution. 
There is also an alternative motivation. When an agent is interpreted as a population---for example, when the buyer in the monopoly problem in \Cref{sec:application} is interpreted as a continuum of buyers with unknown values---observing the operation of the mechanism in a single instance would indeed reveal the empirical distribution. 
These are ultimately modeling decisions, and alternative theories could be developed under different assumptions. For instance, the designer might observe only certain statistics of the message distribution or attempt to make inferences directly from a single message profile. Observing the empirical distribution of messages provides a natural benchmark.
\end{remark}

\begin{example}\label{eg:direct}
If \(\mathcal{M}\) is a direct mechanism and $\sigma$ is a truth-telling strategy profile, then $\sigma_{\pi}=\pi$ for any $\pi \in \Delta(\Theta).$ Hence, $\Pi^{\mathcal{M}}(\sigma,\sigma_{\pi_0})=\{\pi_0\}$ for any true prior $\pi_0 \in \Delta(\Theta)$. The mechanism fully reveals the true prior distribution and produces the maximum amount of information.
\end{example}

\begin{example}\label{eg:pooling}
    For any mechanism $\mathcal{M}$, suppose either $M=\{m^*\}$ is a singleton, or agents play a pooling strategy profile $\sigma \equiv 1_{m^*}$, where $m^* \in M$ and $1_{m^*}$ denotes the Dirac measure on $m^*$. Then $\sigma_{\pi}=1_{m^*}\in \Delta(M)$ for any $\pi \in \Delta(\Theta).$ Hence, $\Pi^{\mathcal{M}}(\sigma, \sigma_{\pi_0})=\Delta(\Theta)$ for any true prior $\pi_0 \in \Delta(\Theta)$. That is, the mechanism produces the minimal amount of information.
\end{example}


As demonstrated above, we can compare two augmented mechanisms through the set of feasible priors they induce. This naturally leads to the following definition.

\begin{definition}
\label{def:xinformative}
Consider two stochastic mappings $g: X \to \Delta(Y)$ and 
$h: X \to \Delta(Z)$. 
We say that $g$ is \textbf{kernel more informative} than $h$, 
denoted $g \succsim h$, if for every $\mu_0 \in \Delta(X)$,
\begin{equation}\label{eq:informationorder}
\{\mu \in \Delta(X) : g_{\mu} = g_{\mu_0}\}
\subseteq
\{\mu \in \Delta(X) : h_{\mu} = h_{\mu_0}\}.
\end{equation}
We say that $g$ and $h$ are \textbf{kernel equivalent}, 
denoted $g \sim h$, if $g \succsim h$ and $h \succsim g$.
\end{definition}

\begin{remark}\label{rem:blackwell-vs-kernel}
   The term ``kernel'' reflects the fact that $g$ and $h$ can be viewed as 
linear operators on measures, and the set of feasible priors is closely 
related to the kernel (null space) of the induced linear operator. 
We drop the qualifier ``kernel'' when the intended informativeness order 
is clear from the context. The relationship between this order and the Blackwell informativeness order is discussed in \Cref{sec:Kernel-vs-Blackwell}: 
if $g$ is Blackwell more informative than $h$, then $g \succsim h$; 
however, the converse need not hold. Thus, the notion of kernel informativeness in \Cref{def:xinformative} is weaker than Blackwell informativeness. 
\end{remark}

\section{Fictitious Revelation Principle}
\subsection{Direct Mechanisms}
A direct mechanism $\mathcal{D}=(\Theta,\delta)$ is a special mechanism in which 
the message space $M$ coincides with the type space $\Theta$, and the outcome 
function $\delta:\Theta \to \Delta(O)$ maps reported types to outcomes. For each agent $i$, a truth-telling strategy in a direct mechanism is the 
identity mapping  $\tau^i: \theta^i \mapsto \theta^i$. The profile of truth-telling strategies is the joint function 
$\tau=(\tau^1,\ldots,\tau^n)$.

Whenever the context is clear, we say that a direct mechanism 
$\mathcal{D}=(\Theta,\delta)$ is (ex post or dominant-strategy) 
\textbf{incentive compatible} if the profile of truth-telling strategies 
constitutes an (ex post or dominant-strategy) equilibrium of the induced game, 
without explicitly referring to the augmentation of the direct mechanism 
by the truth-telling strategies.

Due to the classical revelation principle, incentive-compatible direct mechanisms are often used by analysts as a technical device to encompass, without loss of generality, all incentive-compatible augmented mechanisms. That is, given any incentive-compatible augmented mechanism $(M,\omega,\sigma)$ there exists an incentive-compatible direct mechanism with outcome function $\delta := \omega \circ \sigma$. 

However, in our setting, as demonstrated by \Cref{eg:direct}, the designer can identify the true prior $\pi_0$ from the empirical distribution of messages generated by a direct mechanism with the truth-telling strategy. By contrast, as shown in \Cref{eg:pooling}, the designer cannot infer anything about the true prior from the empirical distribution generated by a pooling strategy.
Clearly, there exist other augmented mechanisms that lie between these two extremes in terms of the feasible priors they induce, as defined in \eqref{eq:feasible-priors}. As a result, a naive application of the classical revelation principle fails. To establish a revelation principle applicable in our setting---with unobservable priors and observable empirical message distributions---we must define a modified class of direct mechanisms that preserves informational content.

\subsection{Fictitious Direct Mechanisms}\label{sec:fictitious}


To capture the information content from an arbitrary augmented mechanism, as defined in \eqref{eq:feasible-priors}, we equip \(\mathcal{D}=(\Theta, \delta)\) with a profile of stochastic mappings $\varphi=(\varphi^1,\ldots,\varphi^n)$, where each
\[
\varphi^i:\Theta^i \to \Delta(\Theta^i).
\] By convention in \eqref{eq:joint-function}, the joint function \(
\varphi:\Theta \to \Delta(\Theta)
\) is a stochastic mapping.  
The interpretation is that if \( \theta \) is the reported type profile in a direct mechanism, then instead of observing $\theta$ perfectly, the designer observes a signal drawn from the probability measure \( \varphi(\theta) \in \Delta(\Theta) \). So $\varphi$ is a garbling of fully revealing reports $\tau$. 

The true prior $\pi_0$, the truth-telling strategy $\tau$, and the stochastic mapping $\varphi$ that perturbs the reports together determine the empirical distribution over $\Theta$ (which is the message space in a direct mechanism). We denote this empirical distribution as $\varphi_{\pi_0} \in \Delta(\Theta)$, defined such that for any measurable set $E \subseteq \Theta$,
\begin{equation}
    \varphi_{\pi_0}(E)
    := \int_{\Theta} \varphi(\theta)(E)\, \pi_0(d\theta).
\end{equation}
Running the direct mechanism $\mathcal{D}$ under the truth-telling strategy, augmented by $\varphi$, the set of priors that are observationally indistinguishable to the designer from the true prior $\pi_0$ is
\begin{equation}\label{eq:fictitious-priors}
    \Pi^{\mathcal{D}}(\varphi,\varphi_{\pi_0})
    := \{\pi \in \Delta(\Theta) : \varphi_{\pi} = \varphi_{\pi_0} \}.
\end{equation}
In pursuing a new revelation principle, the central question is whether \eqref{eq:fictitious-priors} can recover the feasible priors in \eqref{eq:feasible-priors} induced by an augmented mechanism $(\mathcal{M},\sigma)$ for any unknown true prior; that is,
\begin{equation}\label{eq:info-equivalent}
\Pi^{\mathcal{D}}(\varphi,\varphi_{\pi_0})
=\Pi^{\mathcal{M}}(\sigma,\sigma_{\pi_0}) \textit{ for all } \pi_0 \in \Delta(\Theta).
\end{equation}
Using the terminology of \Cref{def:xinformative}, \eqref{eq:info-equivalent} is the same as saying $\varphi$ and $\sigma$ are kernel equivalent, i.e., $\varphi \sim \sigma$. 

Not all stochastic mapping $\varphi:\Theta \rightarrow \Delta(\Theta)$ will do the job. 

\begin{definition}
\label{def:fictitious}

Consider a direct mechanism $\mathcal{D}=(\Theta,\delta)$ and a profile of stochastic mappings $\varphi=(\varphi^1,\ldots,\varphi^n)$, where for each $i$,
\(
\varphi^i:\Theta^i \to \Delta(\Theta^i).
\)
We say that $\varphi$ is a \textbf{filter} of $\mathcal{D}$ if $\varphi$ is Blackwell more informative than $\delta$. 
If $\varphi$ is a filter of $\mathcal{D}$, we say  that $(\mathcal{D},\varphi)$ is a \textbf{fictitious direct mechanism}.
\end{definition}

\begin{remark}
    Fictitious direct mechanisms are not special mechanisms, but purely technical devices aimed to capture all, potentially very complex, augmented mechanisms $(\mathcal{M},\sigma)$ in terms of allocations and information. Of course there is also a different interpretation. A third party runs a standard direct mechanism but reveals filtered information to the designer. Alternatively,  the designer runs the direct mechanism and commits to observing the reported types through $\varphi.$ This interpretation is helpful in understanding the motivation for fictitious direct mechanisms, but we should emphasize that the designer doesn't observe the exact reported types (as in contrast with Example \ref{eg:direct}) and hence it only serves as a technical device in our paper.

    Since messages in $(\mathcal{M},\sigma)$ reveal more than the outcomes (an outcome function is defined on the message space), the filter $\varphi$ in the fictitious direct mechanism that captures the information of $\mathcal{M}$ should be at least as informative as $\delta$ in the direct mechanism $\mathcal{D}$. A natural benchmark notion of informativeness is the Blackwell order. This motivates the requirement for a filter in \Cref{def:fictitious}. As we shall see shortly, this is the right restriction, but can also be strong. We therefore introduce a weaker notion of informativeness, and hence a weak form of filters and fictitious direct mechanisms.
\end{remark}




\begin{definition}
\label{def:weak-fictitious}

Consider a direct mechanism $\mathcal{D}=(\Theta,\delta)$ and a profile of stochastic mappings $\varphi=(\varphi^1,\ldots,\varphi^n)$, where each
\(
\varphi^i:\Theta^i \to \Delta(\Theta^i).
\)
We say that $\varphi$ is a \textbf{weak filter} of $\mathcal{D}$ if there exists 
$f = (f^1,\ldots,f^n)$, with each $f^i$ a stochastic mapping on $\Theta^i$,\footnote{The domain of $f^i$ is $\Theta^i$, while its codomain is unrestricted.} such that 

\textup{(i)} $\varphi \sim f$\textup{;} 

\textup{(ii)} $f$ is Blackwell more informative than $\delta$. 

If $\varphi$ is a weak filter of $\mathcal{D}$, we say  that $(\mathcal{D},\varphi)$ is a \textbf{weak fictitious direct mechanism}.
\end{definition}

Comparing \Cref{def:fictitious} and \Cref{def:weak-fictitious}, we see that a weak filter $\varphi$ is a \textbf{filter} if the intermediate mapping $f$ can be taken to be $\varphi$ in \textup{(i)}–\textup{(ii)}.


\subsection{Fictitious Revelation Principle}
\label{sec:fictitiousrp}


 An augmented mechanism $(\mathcal{M}, \sigma)$ and a fictitious direct mechanism are equivalent if they implement the same allocations and generate the same information.

\begin{definition} An augmented mechanism $(\mathcal{M}, \sigma)$ and a (weak) fictitious direct mechanism $(\mathcal{D}, \varphi)$ are \textbf{equivalent} if 

\textup{(i)} $\omega \circ \sigma = \delta$\textup{;}

\textup{(ii)} $\sigma \sim \varphi$. 

If $(\mathcal{M}, \sigma)$ and $(\mathcal{D}, \varphi)$ are equivalent, we say that $(\mathcal{D}, \varphi)$ is \textbf{implemented} by the augmented mechanism $(\mathcal{M}, \sigma)$, and that $(\mathcal{M}, \sigma)$ is \textbf{represented} by the (weak) fictitious direct mechanism $(\mathcal{D}, \varphi)$. \end{definition}

The first condition implies that incentive compatibility is preserved within an equivalence class. By \eqref{eq:informationorder} and \eqref{eq:info-equivalent}, the second condition above implies that no matter what the \textit{true} prior distribution is, the filter $\varphi$ generates a \textit{fictitious} observation that induces the same set of feasible priors as the empirical distribution generated by $\sigma$ does. 

Although a fictitious direct mechanism is not itself an augmented mechanism, it can always be implemented by one. The converse, however, does not hold in general. This asymmetry motivates a distinction between strong and weak forms of what we term the ``fictitious revelation principle.''
\subsubsection{Strong Form}
We establish the the equivalence of fictitious mechanisms and augmented mechanisms in two environments: (i) the type and message spaces are continuous; (ii) deterministic mechanisms.

\begin{theorem}
\label{thm:strong-RP}
Every fictitious direct mechanism $(\mathcal{D}, \varphi)$ admits an equivalent augmented mechanism $(\mathcal{M}, \sigma)$. Conversely, if for each $i\in N$ the type space $\Theta^i$ and the message space $M^i$ are uncountable standard Borel spaces, then every augmented mechanism $(\mathcal{M}, \sigma)$ admits an equivalent fictitious direct mechanism $(\mathcal{D}, \varphi)$.
\end{theorem}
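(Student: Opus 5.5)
The forward direction should follow directly from the definition of a filter. The converse should follow by relabeling messages as types through a Borel isomorphism, which is the only place the cardinality assumption is needed. Throughout, I read ``$\varphi$ is Blackwell more informative than $\delta$'' in the garbling sense: there is a stochastic mapping $g:\Theta\to\Delta(O)$ with $\delta = g\circ\varphi$. This is the form I expect \Cref{sec:Kernel-vs-Blackwell} to use, and it is an assumption of this plan.

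\emph{From fictitious to augmented.} Let $(\mathcal{D},\varphi)$ be a fictitious direct mechanism, and take the garbling $g:\Theta\to\Delta(O)$ with $\delta=g\circ\varphi$ provided by the filter property.
\begin{itemize}
\item Define the mechanism $\mathcal{M}=(M,\omega)$ by $M^i:=\Theta^i$ and $\omega:=g$.
\item Define agent $i$'s strategy by $\sigma^i:=\varphi^i$, a legitimate stochastic mapping $\Theta^i\to\Delta(M^i)$.
\item Because $\varphi$ is the joint function of the $\varphi^i$ in the sense of \eqref{eq:joint-function}, the profile $\sigma$ coincides with $\varphi$ as a stochastic mapping $\Theta\to\Delta(M)$.
\end{itemize}
Then $\omega\circ\sigma=g\circ\varphi=\delta$, and $\sigma\sim\varphi$ holds trivially since $\sigma=\varphi$. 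Hence the two conditions of equivalence hold. The definition of equivalence imposes no incentive-compatibility requirement here, and none is needed.

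\emph{From augmented to fictitious.} Let $(\mathcal{M},\sigma)$ be an augmented mechanism with every $\Theta^i$ and $M^i$ an uncountable standard Borel space.
\begin{enumerate}
\item By Kuratowski's Borel isomorphism theorem, choose for each $i$ a Borel bijection $\psi^i:M^i\to\Theta^i$ with Borel inverse. Let $\psi$ be the product map, which is again a Borel isomorphism $M\to\Theta$.
\item Define $\varphi^i(\theta^i):=\psi^i_{\#}\sigma^i(\theta^i)$, the pushforward of agent $i$'s message distribution. Viewing $\psi^i$ as a deterministic kernel, this is $\varphi^i=\psi^i\circ\sigma^i$, and so $\varphi=\psi\circ\sigma$ as joint functions.
\item Set $\delta:=\omega\circ\sigma$. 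This gives condition (i) of equivalence immediately.
\item For condition (ii), a Fubini computation from \eqref{eq:average} gives $\varphi_\pi=\psi_{\#}(\sigma_\pi)$ for every $\pi\in\Delta(\Theta)$. Since $\psi_{\#}$ is injective on $\Delta(M)$ (invert with $\psi^{-1}_{\#}$), we get $\varphi_\pi=\varphi_{\pi_0}$ if and only if $\sigma_\pi=\sigma_{\pi_0}$. Thus the feasible-prior sets coincide for every $\pi_0$, i.e.\ $\varphi\sim\sigma$.
\item To see that $\varphi$ is a filter, note that $\delta=\omega\circ\sigma=(\omega\circ\psi^{-1})\circ\varphi$. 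So $\omega\circ\psi^{-1}:\Theta\to\Delta(O)$ is a garbling carrying $\varphi$ into $\delta$, and $\varphi$ is Blackwell more informative than $\delta$.
\end{enumerate}

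\emph{Main obstacle.} The delicate step is measure-theoretic bookkeeping rather than the construction itself. One needs that pushing a kernel forward through a Borel isomorphism yields a measurable kernel. One also needs that composition of kernels is associative, so that $(\omega\circ\psi^{-1})\circ(\psi\circ\sigma)=\omega\circ\sigma$, and that the averaging identity $\varphi_\pi=\psi_{\#}(\sigma_\pi)$ holds; all of these follow from Fubini on standard Borel spaces. The uncountability hypothesis enters only through step 1. Without it, for example with finite spaces where $|M^i|>|\Theta^i|$, no such relabeling exists. This is presumably why the converse is restricted and why the paper separately introduces weak filters.
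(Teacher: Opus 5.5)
Your proposal is correct and follows the paper's proof. The forward direction sets $M^i=\Theta^i$, $\sigma=\varphi$, $\omega=g$, and the converse uses a Borel isomorphism between $M^i$ and $\Theta^i$ to take $\varphi$ to be the relabeled $\sigma$. The only difference is that you spell out what the paper's ``without loss of generality identify $M^i$ with $\Theta^i$'' leaves implicit: the pushforward $\varphi=\psi\circ\sigma$, the identity $\varphi_\pi=\psi_{\#}(\sigma_\pi)$, and the relabeled garbling $\omega\circ\psi^{-1}$.
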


\begin{proof}

For any fictitious direct mechanism $(\mathcal{D}, \varphi)$, since $\varphi$ is Blackwell more informative than $\delta$, there exists a measurable mapping $g : \Theta \to \Delta(O)$ such that $\delta = g \circ \varphi$. 
Consider the augmented mechanism $(\mathcal{M}, \sigma)$ defined by $\omega := g$, $M^i: = \Theta^i$, and $\sigma^i: = \varphi^i$ for each $i \in N$. Then $\delta = \omega \circ \sigma$ and $\sigma \sim \varphi$. Hence, $(\mathcal{M}, \sigma)$ is equivalent to $(\mathcal{D}, \varphi)$.


Conversely, fix any augmented mechanism $(\mathcal{M}, \sigma)$. 
Consider a direct mechanism $\mathcal{D}$ with outcome function 
$\delta := \omega \circ \sigma$.
We seek a filter $\varphi : \Theta \to \Delta(\Theta)$ such that 
$\varphi \sim \sigma$ and $\varphi$ is Blackwell more informative than $\delta$.
Since for each $i$ the type space $\Theta^i$ and the message space $M^i$ 
are uncountable standard Borel spaces, they are Borel isomorphic. 
Hence, without loss of generality, we may identify $M^i$ with $\Theta^i$ 
and write agent $i$’s strategy as 
$\sigma^i : \Theta^i \to \Delta(\Theta^i).$
Define $\varphi := \sigma$. Then, by construction, $\varphi \sim \sigma$, 
and hence
$\delta = \omega \circ \sigma = \omega \circ \varphi.$
Therefore, $\varphi$ is Blackwell more informative than $\delta$. Therefore, $(\mathcal{D}, \varphi)$ is a fictitious direct mechanism equivalent to $(\mathcal{M}, \sigma)$.
\end{proof}


In many applications, we are interested in deterministic mechanisms and strategies. We regard a Borel measurable function $g:X \rightarrow Y$ as a stochastic mapping $x \mapsto 1_{g(x)}$.

\begin{definition}\label{def:deterministic}
An augmented mechanism $(\mathcal{M},\sigma)$, where $\mathcal{M}=(M,\omega)$,  is \textbf{deterministic} if 
$\omega: M \to O$ and $\sigma$ is a pure strategy profile. A fictitious direct mechanism $(\mathcal{D},\varphi)$, where $\mathcal{D}=(\Theta,\delta)$, is \textbf{deterministic} if 
$\delta: \Theta \to O$ and $\varphi: \Theta \to \Theta$.
\end{definition}

\begin{theorem}
\label{thm:deterministic}
Every deterministic fictitious direct mechanism $(\mathcal{D}, \varphi)$ 
is implemented by a deterministic augmented mechanism $(\mathcal{M}, \sigma)$. 
Conversely, every deterministic augmented mechanism $(\mathcal{M}, \sigma)$ 
is represented by a deterministic fictitious direct mechanism $(\mathcal{D}, \varphi)$.
\end{theorem}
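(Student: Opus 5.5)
The plan is to mirror the proof of \Cref{thm:strong-RP}, keeping every object deterministic. The converse direction cannot rely on a Borel isomorphism between $M^i$ and $\Theta^i$, because the spaces may be finite or of different cardinality. I would instead build the filter $\varphi$ from $\sigma$ itself by selecting a representative type in each fiber of $\sigma^i$.

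\emph{Forward direction.} Let $(\mathcal{D},\varphi)$ be a deterministic fictitious direct mechanism. Since $\varphi$ is Blackwell more informative than $\delta$, there is a kernel $g:\Theta\to\Delta(O)$ with $1_{\delta(\theta)}=g(\varphi(\theta))$ for all $\theta$. Hence $g(m)$ is a Dirac measure at every $m$ in the image $\varphi(\Theta)$, and its atom is $\delta(\theta)$ for any $\theta$ with $\varphi(\theta)=m$.
\begin{itemize}
\item Let $A\subseteq\Theta$ be the set of $m$ at which $g(m)$ is a Dirac measure. On standard Borel spaces $A$ is measurable, being the set where $g(m)$ is supported on a single point, and the map $A\ni m\mapsto$ (atom of $g(m)$) is measurable.
\item Define $\omega(m)$ to be this atom for $m\in A$, and a fixed outcome $o^*$ otherwise.
\item Set $M^i:=\Theta^i$ and $\sigma^i:=\varphi^i$.
\end{itemize}
Then $\omega$ is deterministic, $\omega\circ\sigma=\delta$, and $\sigma\sim\varphi$ trivially.

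\emph{Converse direction.} Let $(\mathcal{M},\sigma)$ be deterministic, so each $\sigma^i:\Theta^i\to M^i$ is a measurable function. Set $\delta:=\omega\circ\sigma$.
\begin{itemize}
\item For each $i$, choose a section $s^i:\sigma^i(\Theta^i)\to\Theta^i$ with $\sigma^i\circ s^i=\mathrm{id}$, and define $\varphi^i:=s^i\circ\sigma^i$. This maps each type to the chosen representative of its fiber.
\item Then $\sigma^i\circ\varphi^i=\sigma^i$, so jointly $\sigma=\sigma\circ\varphi$ and $\varphi=s\circ\sigma$.
\item Each of $\sigma,\varphi$ is therefore a deterministic garbling of the other, so they are Blackwell equivalent. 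By \Cref{rem:blackwell-vs-kernel} (Blackwell order implies kernel order), $\sigma\sim\varphi$.
\item Finally $\delta=\omega\circ\sigma=(\omega\circ\sigma)\circ\varphi$. With the deterministic garbling $\omega\circ\sigma$, this shows $\varphi$ is a filter of $\mathcal{D}$.
\end{itemize}
So $(\mathcal{D},\varphi)$ is a deterministic fictitious direct mechanism equivalent to $(\mathcal{M},\sigma)$.

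\emph{Main obstacle.} The delicate step is measurability of the section $s^i$. When $\Theta^i$ and $M^i$ are finite (or countable) it is trivial: pick any element of each fiber. In the general standard Borel case, the image $\sigma^i(\Theta^i)$ is only analytic. The Jankov--von Neumann uniformization theorem then gives a universally measurable section, rather than a Borel one.

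I would first try to argue that universal measurability suffices. The compositions $\varphi^i=s^i\circ\sigma^i$ are universally measurable, so every average $\varphi_\pi$ over a Borel prior $\pi$ is well defined, which is all that kernel equivalence uses. If that is not acceptable, two fallbacks remain:
\begin{itemize}
\item modify $\varphi^i$ on a null-free Borel set, using that a universally measurable map agrees with a Borel map off a set that is null for every measure concentrated on a given analytic set;
\item in the uncountable case, fall back on the Borel-isomorphism trick from \Cref{thm:strong-RP}, noting that a Borel isomorphism preserves determinism.
\end{itemize}
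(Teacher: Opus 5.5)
Your proof is correct once the measurability step is settled as described below, and it departs from the paper's proof in two places where the departure is actually needed.

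\textbf{Forward direction.} The paper takes $\sigma:=\varphi$ and $\omega:=\delta$, and asserts that $\delta\circ\varphi=\delta$ holds ``by construction''. The filter condition does not give this. Take $\Theta=\{a,b\}$, let $\varphi$ swap $a$ and $b$, and let $\delta(a)\neq\delta(b)$. Then $\varphi$ is a filter, but $\delta\circ\varphi\neq\delta$. Your $\omega$ is the atom of the Blackwell garbling $g$ on the Borel set where $g$ is Dirac. That choice is what delivers $\omega\circ\sigma=\delta$, and it needs no selection argument because $g$ is already a Borel kernel.

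\textbf{Converse direction.} You and the paper use the same fiber selector $\varphi^i=s^i\circ\sigma^i$. The paper gets $\sigma\sim\varphi$ from the two maps generating the same $\sigma$-algebra. You get it from mutual deterministic garblings plus the Blackwell-implies-kernel lemma, which is the same content. The difference is measurability. The paper's footnote claims a Borel right inverse from the uniformization theorem it cites. You correctly note that Jankov--von Neumann yields only a $\sigma(\Sigma^1_1)$-measurable section. In fact a Borel selector that is constant on fibers can fail to exist: its fixed-point set would be a Borel transversal on which $\sigma^i$ is injective. Lusin--Souslin would then make $\sigma^i(\Theta^i)$ Borel. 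This is false for a continuous map from Baire space onto a non-Borel analytic subset of $[0,1]$.

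\textbf{What to drop.} Do not rest the converse on your first option (universal measurability). \Cref{def:deterministic} requires Borel maps, so a universally measurable $\varphi^i$ does not qualify. Drop fallback (b) as well. A Borel modification off a $\mu$-null set depends on $\mu$, whereas $\sigma\circ\varphi=\sigma$ and $\delta=(\omega\circ\sigma)\circ\varphi$ must hold at every $\theta$.

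\textbf{What to use instead.} Commit to an agent-by-agent case split.
\begin{itemize}
\item If $\Theta^i$ is countable, every section gives a Borel $\varphi^i$.
\item If $M^i$ is countable, $s^i\circ\sigma^i$ is constant on countably many Borel fibers, hence Borel.
\item If both are uncountable, drop the selector. Set $\varphi^i:=\iota^i\circ\sigma^i$ for a Borel isomorphism $\iota^i:M^i\to\Theta^i$, as in your fallback (c).
\end{itemize}
By Kuratowski's theorem these cases are exhaustive; note that your proposal does not explicitly cover the mixed cases where only one of $\Theta^i$, $M^i$ is countable. In every case $\varphi^i=a^i\circ\sigma^i$ and $\sigma^i=b^i\circ\varphi^i$ for Borel maps $a^i,b^i$. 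Hence the joint $\varphi$ satisfies $\varphi\sim\sigma$, and $\delta=(\omega\circ b)\circ\varphi$ with $\omega\circ b$ deterministic.
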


\subsubsection{Weak Form}
In the finite case, we must settle for representation by weak fictitious mechanisms.
\begin{theorem}
\label{thm:weak-RP}
Every weak fictitious direct mechanism $(\mathcal{D},\varphi)$ 
is implemented by an augmented mechanism $(\mathcal{M},\sigma)$. 
Conversely, if $\Theta$ and $M$ are finite, then every augmented mechanism 
$(\mathcal{M},\sigma)$ is represented by a weak fictitious direct mechanism 
$(\mathcal{D},\varphi)$.
\end{theorem}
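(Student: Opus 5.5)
The first direction follows the pattern of the proof of \Cref{thm:strong-RP}, with the intermediate mapping $f$ taking the place of $\varphi$. Let $(\mathcal{D},\varphi)$ be a weak fictitious direct mechanism. Fix $f=(f^1,\ldots,f^n)$ with $f^i:\Theta^i\to\Delta(Y^i)$, such that $\varphi\sim f$ and $f$ is Blackwell more informative than $\delta$. Blackwell informativeness gives a measurable $g:Y\to\Delta(O)$ with $\delta=g\circ f$, where $Y=\prod_i Y^i$. We then set $M^i:=Y^i$, $\sigma^i:=f^i$ and $\omega:=g$. With these choices $\omega\circ\sigma=\delta$, and $\sigma=f\sim\varphi$. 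The product structure of $f$ is exactly what makes $\sigma$ a legitimate strategy profile. Hence $(\mathcal{M},\sigma)$ implements $(\mathcal{D},\varphi)$. One minor check remains: $Y^i$ should be standard Borel so that $\mathcal{M}$ is admissible. If the codomain is unrestricted, I would either assume this or replace $f^i$ by a kernel-equivalent copy on a standard Borel space.

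For the converse, take $\Theta$ and $M$ finite and fix $(\mathcal{M},\sigma)$. Let $\delta:=\omega\circ\sigma$, and let $f:=\sigma$ be the intermediate mapping. Then $f$ is automatically Blackwell more informative than $\delta$, because $\delta=\omega\circ f$. It therefore suffices to construct, for each $i$, a kernel $\varphi^i:\Theta^i\to\Delta(\Theta^i)$ such that the product $\varphi$ satisfies $\varphi\sim\sigma$.

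The plan is to reduce kernel equivalence to a linear-algebra statement and work agent by agent. In the finite case, a kernel $h:X\to\Delta(Z)$ is a column-stochastic matrix $H$, and $h_\mu=H\mu$. So $\{\mu:h_\mu=h_{\mu_0}\}=\Delta(X)\cap(\mu_0+\ker H)$, where $\ker H$ is taken on signed measures with total mass zero. Any stochastic $H$ maps mass-zero vectors to mass-zero vectors, so two kernels are equivalent when their null spaces agree. Conversely, taking $\mu_0$ in the relative interior of the simplex shows that equivalence forces equal null spaces on the mass-zero subspace. For the joint mapping, the matrix is the Kronecker product $\Sigma=\Sigma^1\otimes\cdots\otimes\Sigma^n$. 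Its null space is determined by the row spaces of the factors: $\operatorname{row}(\Sigma)=\bigotimes_i\operatorname{row}(\Sigma^i)$. It therefore suffices to find, for each $i$, a $|\Theta^i|\times|\Theta^i|$ column-stochastic $\Phi^i$ with $\operatorname{row}(\Phi^i)=\operatorname{row}(\Sigma^i)$. Every stochastic matrix contains the all-ones vector $\mathbf{1}$ in its row space.

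The main obstacle is this construction. The difficulty is that $|M^i|$ may exceed $|\Theta^i|$, so $\sigma^i$ cannot simply be relabelled, as in \Cref{thm:strong-RP}. However, $\operatorname{row}(\Sigma^i)$ is a subspace $V\ni\mathbf{1}$ of dimension $r\le|\Theta^i|$. Pick a basis $\mathbf{1},v_2,\dots,v_r$ of $V$. Scale each $v_k$ so that $\mathbf{1}/|\Theta^i|+v_k$ and $\mathbf{1}/|\Theta^i|-v_k$ are nonnegative. Build rows from these as $\tfrac{1}{2(r-1)}(\mathbf{1}/|\Theta^i|\pm v_k)$, adding uniform rows where needed. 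With care about the count of rows (at most $|\Theta^i|$, padding with zero rows or splitting $\mathbf{1}$) and about the column sums, this gives a column-stochastic matrix with row space exactly $V$. If the $\pm$ pairs need too many rows, I would use the $r$ rows $c\mathbf{1}+\epsilon v_k$ ($k\ge2$) together with $\mathbf{1}$ minus their sum, suitably normalized. These are $r\le|\Theta^i|$ rows spanning $V$, and they are nonnegative for small $\epsilon$. This last version is the cleanest, and I would verify its rank and positivity directly.
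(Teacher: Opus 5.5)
Your proposal is correct, and its skeleton matches the paper's. The first direction is the same $(M_f,g,f)$ construction. For the converse, the paper also sets $f:=\sigma$ and reduces $\varphi^i\sim\sigma^i$ to $\ker\Phi^i=\ker\Sigma^i$; it proves that equivalence by splitting a null vector into its positive and negative parts rather than perturbing an interior $\mu_0$. It then lifts to the joint mapping through Kronecker products via factorizations $\Sigma^i=G^i\Phi^i$, $\Phi^i=H^i\Sigma^i$, which is equivalent to your identity $\operatorname{row}(\otimes_i\Sigma^i)=\otimes_i\operatorname{row}(\Sigma^i)$.

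The genuine difference is how $\Phi^i$ is built. The paper splits into three cases:
\begin{itemize}
\item if $|M^i|\le|\Theta^i|$, it embeds $M^i$ in $\Theta^i$ and pads with zero rows;
\item if $|M^i|>|\Theta^i|=\rank\Sigma^i$, it takes the identity;
\item if $|M^i|>|\Theta^i|>\rank\Sigma^i$, it moves a maximal independent set of rows to the top, keeps the first $|\Theta^i|-1$ rows, and sums the remaining rows into one.
\end{itemize}
Your construction is case-free. Take a basis $\mathbf 1,v_2,\dots,v_r$ of $V=\operatorname{row}(\Sigma^i)$, and form the rows $\tfrac1r\mathbf 1+\epsilon v_k$ ($k\ge2$), the complementary row $\tfrac1r\mathbf 1-\epsilon\sum_k v_k$, and $|\Theta^i|-r$ zero rows. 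For small $\epsilon>0$ this is a column-stochastic matrix whose row space is exactly $V$, and no further normalization is needed. You should discard the $\pm$-pair version: its columns sum to $1/|\Theta^i|$, and it can need more than $|\Theta^i|$ rows.

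Your route isolates the fact that only $\operatorname{row}(\Sigma^i)$ matters. It also proves slightly more: every subspace containing $\mathbf 1$ is the row space of a square stochastic matrix. The paper's route buys interpretability. Except when $|M^i|>|\Theta^i|=\rank\Sigma^i$, its $\varphi^i$ is obtained from $\sigma^i$ by relabelling or merging messages, so it is a literal coarsening of what the agent reports. Your $\Phi^i$, by contrast, is related to $\Sigma^i$ only through its row space.
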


\begin{remark}
\label{rmk:finiterp}
The weak form of the revelation principle in \Cref{thm:weak-RP} cannot be strengthened. 
There exist examples with \( |\Theta| < |M| \) in which an augmented indirect mechanism 
cannot be represented by a (strong) fictitious direct mechanism, even though it admits a 
representation as a weak fictitious direct mechanism. 
\end{remark}

\section{Self-Confirming Mechanisms}
\subsection{Definition}\label{sec:selfconfirming}

Given a mechanism $\mathcal{M}=(M,\omega)$, a strategy profile $\sigma$, 
and a probability measure $\pi\in \Delta(\Theta)$ over type profiles, we write the designer's expected payoff as
\begin{equation}
V^0(\mathcal{M},\sigma,\pi)
=
\int_{\Theta} u^0((\omega \circ \sigma)(\theta), \theta)\,\pi(d\theta).
\end{equation}

Similarly, given a direct mechanism $\mathcal{D}=(\Theta,\delta)$, 
the agents' truth-telling strategy profile, and a probability measure 
$\pi\in \Delta(\Theta)$, the designer's expected payoff is
\begin{equation}
V^0(\mathcal{D},\pi)
=
\int_{\Theta} u^0(\delta(\theta),\theta)\,\pi(d\theta).
\end{equation}

\begin{definition}
An incentive compatible augmented mechanism $(\mathcal{M},\sigma)$ is 
$\pi_0$-\textbf{self-confirming} if there exists $\pi \in 
\Pi^{\mathcal{M}}(\sigma,\sigma_{\pi_0})$ such that
\begin{equation}\label{ineq:self-confirming}
V^0(\mathcal{M},\sigma,\pi) 
\ge 
V^0(\mathcal{M}',\sigma',\pi)
\end{equation}
for all incentive compatible augmented mechanisms $(\mathcal{M}',\sigma')$.
\end{definition}

\begin{remark}
    A self-confirming mechanism does not depend on the true prior $\pi_0$ directly, but instead depends on the empirical distribution $\sigma_{\pi_0}$; for notational simplicity, we introduce $\pi_0$ into the definition. Also, the idea of information-generating mechanisms opens up many possibilities for alternative solution concepts, such as requiring $(\mathcal{M},\sigma)$ to solve
    \begin{equation}\label{ineq:max-min}
     \sup_{(\mathcal{M},\sigma)}\inf_{\pi \in \Pi^{\mathcal{M}}(\sigma,\sigma_{\pi_0})} V^0(\mathcal{M}',\sigma',\pi).
   \end{equation}
    This alternative concept evaluates a mechanism by its worst-case performance across beliefs that are empirically consistent with it. While appealing in its own right, we instead pursue the equilibrium approach in this paper.
\end{remark}

Analogously, self-confirming fictitious direct mechanisms can be defined.
\begin{definition}
    \label{def:direct-self-confirming}
    An incentive compatible fictitious direct mechanism $(\mathcal{D},\varphi)$ is $\pi_0$-\textbf{self-confirming} if there exists $\pi \in \Pi^{\mathcal{D}}(\varphi, \varphi_{\pi_0})$ such that \begin{equation} \label{eq:directselfconfirming} V^0(\mathcal{D},\pi) \geq V^0(\mathcal{D}',\pi)  \end{equation} for all incentive compatible direct mechanism $\mathcal{D}'$. If $(\mathcal{D},\varphi)$ is $\pi_0$-self-confirming for some filter $\varphi$, then we say $\mathcal{D}$ is $\pi_0$-\textbf{self-confirming}.

\end{definition}



The following monotonicity of self-confirmation of fictitious direct mechanisms is immediate from Definition \ref{def:direct-self-confirming}:

\begin{lemma}
    \label{l3+}
   Suppose $(\mathcal{D},\varphi)$ and $ (\mathcal{D},\tilde{\varphi})$ are two fictitious direct mechanisms such that $\tilde{\varphi} \succsim \varphi$.  Then for any $\pi_0\in \Delta(\Theta)$, $(\mathcal{D},\varphi)$ is  $\pi_0$-self-confirming if $(\mathcal{D}, \tilde{\varphi})$ is  $\pi_0$-self-confirming.
\end{lemma}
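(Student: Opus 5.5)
The argument only needs to pass the justifying belief from the more informative filter to the less informative one. The optimality condition \eqref{eq:directselfconfirming} compares $V^0(\mathcal{D},\pi)$ with $V^0(\mathcal{D}',\pi)$ over all incentive compatible direct mechanisms $\mathcal{D}'$. Neither side of this inequality involves the filter at all. The filter enters the definition only through the constraint $\pi \in \Pi^{\mathcal{D}}(\varphi,\varphi_{\pi_0})$. So the lemma reduces to an inclusion between sets of feasible priors.

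First, fix $\pi_0 \in \Delta(\Theta)$ and assume $(\mathcal{D},\tilde{\varphi})$ is $\pi_0$-self-confirming. By Definition \ref{def:direct-self-confirming}, $\mathcal{D}$ is incentive compatible, and there is some $\pi \in \Pi^{\mathcal{D}}(\tilde{\varphi},\tilde{\varphi}_{\pi_0})$ satisfying \eqref{eq:directselfconfirming} against every incentive compatible $\mathcal{D}'$.

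Second, apply Definition \ref{def:xinformative} to $g=\tilde{\varphi}$, $h=\varphi$ and $\mu_0=\pi_0$. Since $\tilde{\varphi}\succsim\varphi$,
\[
\Pi^{\mathcal{D}}(\tilde{\varphi},\tilde{\varphi}_{\pi_0})=\{\mu:\tilde{\varphi}_\mu=\tilde{\varphi}_{\pi_0}\}\subseteq\{\mu:\varphi_\mu=\varphi_{\pi_0}\}=\Pi^{\mathcal{D}}(\varphi,\varphi_{\pi_0}).
\]
Hence the same $\pi$ lies in $\Pi^{\mathcal{D}}(\varphi,\varphi_{\pi_0})$.

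Third, use this $\pi$ as the witness for $(\mathcal{D},\varphi)$. Incentive compatibility of $(\mathcal{D},\varphi)$ is a property of $\delta$ alone, so it is inherited from $(\mathcal{D},\tilde{\varphi})$. The hypotheses already state that $(\mathcal{D},\varphi)$ is a fictitious direct mechanism, so $\varphi$ is a filter. The inequality \eqref{eq:directselfconfirming} holds for $\pi$ unchanged, so $(\mathcal{D},\varphi)$ is $\pi_0$-self-confirming.

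I expect no step to be a real obstacle. The only care needed is to apply the informativeness order in the correct direction: the more informative filter gives the smaller feasible set, so any belief that justifies $\mathcal{D}$ under $\tilde{\varphi}$ is still admissible under $\varphi$.
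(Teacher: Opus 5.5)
Your proof is correct and is exactly the argument the paper has in mind when it calls the lemma immediate from Definition \ref{def:direct-self-confirming}: kernel informativeness gives $\Pi^{\mathcal{D}}(\tilde{\varphi},\tilde{\varphi}_{\pi_0})\subseteq\Pi^{\mathcal{D}}(\varphi,\varphi_{\pi_0})$, and because the optimality inequality does not involve the filter, the belief that justifies $(\mathcal{D},\tilde{\varphi})$ also justifies $(\mathcal{D},\varphi)$. Your argument uses the inclusion only at the given $\pi_0$, and that is also all the paper establishes when it later invokes this lemma in the proof of Corollary \ref{cor:monotonicrobustlysc}.
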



\subsection{Grain-of-Truth Refinement}\label{sec:grain}

The equilibrium approach to mechanisms opens up the possibility of refinements in the spirit of classical game theory. 
Although the message spaces and equilibrium strategies may differ substantially across mechanisms, what ultimately matters for the designer is the induced allocation and the information revealed, as captured by fictitious direct mechanisms. The perturbations we consider are purely informational and belief-based: the allocation is held fixed. We restrict attention to continuous type and message spaces so that the strong form of the fictitious revelation principle holds.


The idea we exploit is what we term a ``grain-of-truth'' refinement, which perturbs a mechanism so that it becomes marginally more informative in the sense that the true prior is revealed on a small set of types, where smallness is measured under the true prior. We interpret this as reflecting exogenous information leakage or acquisition rather than the outcome of strategic choices. This perturbation consequently gives rise to a family of smaller sets of feasible priors. By the fictitious revelation principle, the perturbation can be carried out on the filter of the canonical fictitious direct mechanism. However, we must ensure that the perturbed filter is itself well defined so that the resulting fictitious direct mechanism is also well defined.

For ease of presentation, we restrict attention to independent type distributions, both for the true prior and for the feasible priors; that is, $\pi_0=(\pi_0^1,\ldots,\pi_0^n) \in \prod_{i \in N} \Delta(\Theta^i)$, and for any fictitious direct mechanism $(\mathcal{D},\varphi)$,
\[
\Pi^{\mathcal{D}}(\varphi, \varphi_{\pi_0})
= \left\{\pi=(\pi^1,\ldots,\pi^n) \in \prod\nolimits_{i \in N} \Delta(\Theta^i)
: \varphi_{\pi}=\varphi_{\pi_0}\right\}.
\]
The formulation and analysis can be extended to general correlated distributions, but doing so would introduce substantially more notation. 

For any $i \in N$, given a set $A^i\subseteq \Theta^i$, 
\begin{equation*}
    \{\pi^i \in \Delta(\Theta^i): \pi^i|_{A^i}=\pi_0^i|_{A^i}\}
\end{equation*}
is the set of  priors on $\Theta^i$ that agree with $\pi_0^i$ on $A^i$. Now
\begin{equation}\label{eq:prior-selection}
            \Pi^{\mathcal{D}}(\varphi, \varphi_{\pi_0}) \cap \left\{\pi \in \prod\nolimits_{j \in N}\Delta(\Theta^j): \pi^i|_{A^i}=\pi_0^i|_{A^i}, i \in N\right\}
\end{equation}
selects feasible priors that agree with $\pi_0^i$ on $A^i$ for any $i$. 
The result below confirms that the intuitive refinement of feasible priors in \eqref{eq:prior-selection} can indeed be achieved through a refinement of filters.

\begin{lemma}
\label{lemma:perturbedfilter}
For any product set $\prod_{i \in N} A^i=A \subseteq \Theta$, there exists a filter 
$\varphi^A : \Theta \to \Delta(\Theta)$ of $\mathcal{D}$ such that
\begin{equation}
\label{eq:perturbedfilter}
\Pi^{\mathcal{D}}(\varphi^A, \varphi^A_{\pi_0})
=
\Pi^{\mathcal{D}}(\varphi, \varphi_{\pi_0}) \cap \left\{\pi \in \prod\nolimits_{j \in N}\Delta(\Theta^j): \pi^i|_{A^i}=\pi_0^i|_{A^i},  i \in N\right\}
\end{equation}
\end{lemma}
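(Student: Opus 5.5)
We construct $\varphi^A$ agent by agent, so that it reports the original filtered signal together with the exact type whenever the type lies in $A^i$. Since the type spaces are uncountable standard Borel spaces (the standing assumption of \Cref{sec:grain}), we can use Borel isomorphisms to recode enlarged signal spaces back into $\Theta^i$.

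\textbf{Step 1: an auxiliary kernel.} For each $i$, define $f^i:\Theta^i\to\Delta(\Theta^i\times(\Theta^i\cup\{\ast\}))$, where $\ast$ is an isolated point, by
\[
f^i(\theta^i)=\varphi^i(\theta^i)\otimes\bigl(1_{A^i}(\theta^i)\,1_{\theta^i}+1_{\Theta^i\setminus A^i}(\theta^i)\,1_{\ast}\bigr).
\]
In words, $f^i$ draws the old signal from $\varphi^i(\theta^i)$ and, independently, reveals $\theta^i$ exactly if $\theta^i\in A^i$, and emits $\ast$ otherwise. This uses that $A^i$ is measurable.

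\textbf{Step 2: compute the feasible set.} Take product priors $\pi=\prod_i\pi^i$. The joint kernel $f$ satisfies $f_\pi=\prod_i f^i_{\pi^i}$, and similarly $\varphi_\pi=\prod_i\varphi^i_{\pi^i}$. For product measures, equality of products is equivalent to equality of each marginal, because the factors are probability measures. So it suffices to show, for each $i$, that
\[
f^i_{\pi^i}=f^i_{\pi^i_0}\iff \varphi^i_{\pi^i}=\varphi^i_{\pi^i_0}\ \text{and}\ \pi^i|_{A^i}=\pi^i_0|_{A^i}.
\]
For the direction ($\Leftarrow$), write
\[
f^i_{\pi^i}(B\times C)=\int_{A^i}\varphi^i(\theta)(B)\,1_C(\theta)\,\pi^i(d\theta)+1_C(\ast)\int_{\Theta^i\setminus A^i}\varphi^i(\theta)(B)\,\pi^i(d\theta).
\]
The first term depends only on $\pi^i|_{A^i}$. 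The second term equals $\varphi^i_{\pi^i}(B)$ minus the $A^i$-part, so it is determined by $\varphi^i_{\pi^i}$ and $\pi^i|_{A^i}$. Hence the two measures agree on rectangles and therefore everywhere.

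For the direction ($\Rightarrow$), take the marginal on the first coordinate to get $\varphi^i_{\pi^i}=\varphi^i_{\pi_0^i}$. Take the marginal on the second coordinate restricted to $\Theta^i$ to get $\pi^i|_{A^i}=\pi^i_0|_{A^i}$. Together these give the right-hand side of \eqref{eq:perturbedfilter} with $f$ in place of $\varphi^A$.

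\textbf{Step 3: recode into $\Theta^i$ and check the filter property.} The space $\Theta^i\times(\Theta^i\cup\{\ast\})$ is uncountable standard Borel, so there is a Borel isomorphism $\beta^i$ from it onto $\Theta^i$. Set $\varphi^{A,i}:=\beta^i_\ast f^i$, the pushforward. A bijective bimeasurable recoding preserves the equalities of averaged measures, so $\varphi^A\sim f$ and Step 2 carries over to $\varphi^A$.

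For the filter property, the first-coordinate projection $p$ satisfies $p\circ(\beta^{-1}\circ\varphi^A)=\varphi$ in the joint sense. Since $\varphi$ is Blackwell more informative than $\delta$, there is $g$ with $\delta=g\circ\varphi$. Then
\[
\delta=(g\circ p\circ\beta^{-1})\circ\varphi^A,
\]
so $\varphi^A$ is Blackwell more informative than $\delta$, and hence is a filter of $\mathcal{D}$.

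\textbf{Main obstacle.} The delicate part is Step 2. One must make sure that the product structure over independent priors lets the per-agent equivalences combine, and that the joint law of the signal and the type-revelation coordinate carries no information beyond the two marginal restrictions. The rectangle decomposition in Step 2 settles the latter, and the former uses that the factors are probability measures.
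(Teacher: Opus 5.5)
Your proof is correct and follows essentially the paper's route: append exact type revelation on $A^i$ to the signal, verify the feasible set agent by agent using the product structure, and recode into $\Theta^i$ by a Borel isomorphism (the paper does this step by invoking \Cref{thm:strong-RP}). Your variant keeps the $\varphi^i$-signal for every type and uses an isolated point $\ast$ rather than a placeholder $\bar{\theta}^i \in \Theta^i\setminus A^i$, which makes the filter (Blackwell) property immediate by projection and also covers the case $A^i=\Theta^i$; the paper leaves the filter property implicit when it says the outcome function is irrelevant.
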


Now we can define the general perturbation:

\begin{definition}
For $\varepsilon > 0$, we say that a filter $\varphi^\varepsilon$ \textbf{reveals $\varepsilon$-truth} about  $\varphi$ under $\pi_0$
if there exists 
$A = \prod_{i \in N} A^i \subseteq \Theta$ with 
$\pi_0^i(A^i) < \varepsilon$ for all $i\in N$ such that
\begin{equation*}
\Pi^{\mathcal{D}}(\varphi^\varepsilon, \varphi^\varepsilon_{\pi_0})
=
\Pi^{\mathcal{D}}(\varphi, \varphi_{\pi_0})
\cap\left\{\pi \in \prod\nolimits_{j \in N}\Delta(\Theta^j): \pi^i|_{A^i}=\pi_0^i|_{A^i},  i \in N\right\}.
\end{equation*}
\end{definition}


\begin{definition}
\label{d10}
A fictitious direct mechanism $(\mathcal{D},\varphi)$ is 
\textbf{robustly $\pi_0$-self-confirming} if there exists $\varepsilon > 0$ such that 
$(\mathcal{D},\varphi^{\varepsilon})$ is $\pi_0$-self-confirming for every filter 
$\varphi^{\varepsilon}$ that reveals $\varepsilon$-truth about $\varphi$ under $\pi_0$. 
If there exists a filter $\varphi$ such that $(\mathcal{D},\varphi)$ is robustly $\pi_0$-self-confirming, 
then we say that $\mathcal{D}$ is \textbf{robustly $\pi_0$-self-confirming} for simplicity.
\end{definition}

This notion requires that not only that $(\mathcal{D},\varphi)$ is $\pi_0$-self-confirming but also that all nearby fictitious direct mechanisms are $\pi_0$-self-confirming. The monotonicity of robust self-confirmation follows from the monotonicity of self-confirmation, which is formalized by the following corollary. 

\begin{corollary}
\label{cor:monotonicrobustlysc}
    Suppose $(\mathcal{D},\varphi) \text{ and } (\mathcal{D},\tilde{\varphi})$ are fictitious direct mechanisms such that $\tilde{\varphi} \succsim \varphi $. Then $(\mathcal{D},\varphi)$ is  robustly $\pi_0$-self-confirming if $(\mathcal{D}, \tilde{\varphi})$ is robustly $\pi_0$-self-confirming.
\end{corollary}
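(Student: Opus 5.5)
The plan is to reduce the corollary to \Cref{l3+} by showing that every filter revealing $\varepsilon$-truth about $\varphi$ is (kernel) less informative than some filter revealing $\varepsilon$-truth about $\tilde{\varphi}$. Suppose $(\mathcal{D},\tilde{\varphi})$ is robustly $\pi_0$-self-confirming with constant $\varepsilon>0$. I will use the same $\varepsilon$ for $(\mathcal{D},\varphi)$. Fix an arbitrary filter $\varphi^\varepsilon$ that reveals $\varepsilon$-truth about $\varphi$ under $\pi_0$. By definition there is a product set $A=\prod_i A^i$ with $\pi_0^i(A^i)<\varepsilon$ for all $i$ such that
\[
\Pi^{\mathcal{D}}(\varphi^\varepsilon,\varphi^\varepsilon_{\pi_0})
=\Pi^{\mathcal{D}}(\varphi,\varphi_{\pi_0})\cap\left\{\pi: \pi^i|_{A^i}=\pi_0^i|_{A^i},\ i\in N\right\}.
\]

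Next I apply \Cref{lemma:perturbedfilter} to the fictitious direct mechanism $(\mathcal{D},\tilde{\varphi})$ with this same set $A$. This yields a filter $\tilde{\varphi}^A$ of $\mathcal{D}$ satisfying \eqref{eq:perturbedfilter} with $\tilde{\varphi}$ in place of $\varphi$. Since $\pi_0^i(A^i)<\varepsilon$, the filter $\tilde{\varphi}^A$ reveals $\varepsilon$-truth about $\tilde{\varphi}$ under $\pi_0$. The robustness hypothesis then makes $(\mathcal{D},\tilde{\varphi}^A)$ $\pi_0$-self-confirming.

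Now compare the feasible sets. Because $\tilde{\varphi}\succsim\varphi$, we have $\Pi^{\mathcal{D}}(\tilde{\varphi},\tilde{\varphi}_{\pi_0})\subseteq\Pi^{\mathcal{D}}(\varphi,\varphi_{\pi_0})$. Intersecting both sides with the same agreement-on-$A$ set gives
\[
\Pi^{\mathcal{D}}(\tilde{\varphi}^A,\tilde{\varphi}^A_{\pi_0})\subseteq\Pi^{\mathcal{D}}(\varphi^\varepsilon,\varphi^\varepsilon_{\pi_0}).
\]
Any $\pi$ in the smaller set that supports self-confirmation of $(\mathcal{D},\tilde{\varphi}^A)$ also lies in the larger set. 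That same $\pi$ therefore witnesses \eqref{eq:directselfconfirming} for $(\mathcal{D},\varphi^\varepsilon)$; this is exactly the reasoning behind \Cref{l3+}, applied at the fixed $\pi_0$. Since $\varphi^\varepsilon$ was arbitrary, $(\mathcal{D},\varphi)$ is robustly $\pi_0$-self-confirming.

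The main subtlety is that the inclusion above holds only at the given true prior $\pi_0$, not for all priors. So I cannot directly invoke \Cref{l3+} as stated, because its hypothesis is the global order $\tilde{\varphi}^A\succsim\varphi^\varepsilon$. Instead I would either repeat its one-line argument pointwise in $\pi_0$, or note that self-confirmation depends only on the feasible set at $\pi_0$. One must also check that the interpretation of $\tilde{\varphi}\succsim\varphi$ is compatible with the restriction to independent priors. It is: intersecting both sets in \eqref{eq:informationorder} with the product priors preserves the inclusion.
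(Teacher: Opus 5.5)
Your proof is correct and follows the same route as the paper. You fix $\varphi^\varepsilon$ together with its set $A$, apply \Cref{lemma:perturbedfilter} to $\tilde{\varphi}$ with that same $A$, use $\tilde{\varphi}\succsim\varphi$ to obtain $\Pi^{\mathcal{D}}(\tilde{\varphi}^A,\tilde{\varphi}^A_{\pi_0})\subseteq\Pi^{\mathcal{D}}(\varphi^\varepsilon,\varphi^\varepsilon_{\pi_0})$, and then transfer the witnessing belief.

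Your caveat that this inclusion is only established at $\pi_0$ is well taken. The paper simply asserts $\tilde{\varphi}^A \succsim \varphi^\varepsilon$ and cites \Cref{l3+}, which requires that order globally; your pointwise version of that argument is what the inclusion actually supports.
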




The following result, which is an implication of Corollary \ref{cor:monotonicrobustlysc}, provides a criterion for verifying whether a direct mechanism is robustly self-confirming or not: we only need to focus on the least informative filter.

\begin{corollary}
\label{cor:least-informative-filter}
    Consider a  direct mechanism $\mathcal{D}=(\Theta,\delta)$ and suppose there exists filter $\varphi$ such that $\varphi \sim \delta$. Then $\mathcal{D}$ is robustly $\pi_0$-self-confirming if and only if $(\mathcal{D},\varphi)$ is robustly $\pi_0$-self-confirming.\footnote{If $n=1$, the existence of such filter is guaranteed. With $n>1$, $\varphi =(\varphi^1,...,\varphi^n)$ is a joint function, while $\delta$ does not have to be.}
\end{corollary}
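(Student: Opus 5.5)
The plan is to prove the two directions separately. The ``if'' direction is immediate. If $(\mathcal{D},\varphi)$ is robustly $\pi_0$-self-confirming with $\varphi$ a filter of $\mathcal{D}$, then by \Cref{d10} $\mathcal{D}$ is robustly $\pi_0$-self-confirming.

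For the ``only if'' direction, suppose $\mathcal{D}$ is robustly $\pi_0$-self-confirming. Then there is some filter $\tilde{\varphi}$ of $\mathcal{D}$ such that $(\mathcal{D},\tilde{\varphi})$ is robustly $\pi_0$-self-confirming. I would show $\tilde{\varphi}\succsim\varphi$ and then invoke \Cref{cor:monotonicrobustlysc}.

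\begin{itemize}
\item Since $\tilde{\varphi}$ is a filter, it is Blackwell more informative than $\delta$.
\item By \Cref{rem:blackwell-vs-kernel}, this gives $\tilde{\varphi}\succsim\delta$. Concretely, $\delta=g\circ\tilde{\varphi}$, so $\tilde{\varphi}_\pi=\tilde{\varphi}_{\pi_0}$ implies $\delta_\pi=g_{\tilde{\varphi}_\pi}=g_{\tilde{\varphi}_{\pi_0}}=\delta_{\pi_0}$.
\item By hypothesis $\delta\sim\varphi$, so $\delta\succsim\varphi$.
\item Kernel informativeness is transitive, because \eqref{eq:informationorder} is an inclusion of sets holding for every $\mu_0$. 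Hence $\tilde{\varphi}\succsim\varphi$.
\end{itemize}

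\Cref{cor:monotonicrobustlysc} then says $(\mathcal{D},\varphi)$ is robustly $\pi_0$-self-confirming, since $(\mathcal{D},\tilde{\varphi})$ is.

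Throughout, the comparisons are kernel comparisons over the set of priors used in the definitions. In the grain-of-truth section this is the independent (product) priors. Restricting the inclusions in \eqref{eq:informationorder} to that subset preserves them, so every step above still holds.

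The only point needing care is the Blackwell-implies-kernel step. It uses linearity of averaging: $(g\circ h)_\mu=g_{h_\mu}$, which follows from Fubini in \eqref{eq:compound2}.
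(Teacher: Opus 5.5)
Your proof is correct and follows the paper's argument. The ``if'' direction is read off Definition~\ref{d10}; for ``only if'', the filter witnessing robust self-confirmation is kernel more informative than $\delta$, hence, by transitivity through $\delta \sim \varphi$, than $\varphi$, so Corollary~\ref{cor:monotonicrobustlysc} applies. You simply make explicit the Blackwell-implies-kernel and transitivity steps (and the restriction to product priors) that the paper uses tacitly, and you leave aside the footnote's claim that such a $\varphi$ exists when $n=1$, which is harmless because the corollary assumes existence.
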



\section{Application: Monopoly}\label{sec:application}
\label{sectionapplication}
In this section, we apply the revelation principle and the solution concept of robust self-confirming mechanisms.
Consider a monopolistic seller who sells a single good to a buyer ($n=1$) whose private value of the good is $\theta \in \Theta=[0,1]$. The buyer's ex post payoff is $\theta-p$ if he obtains the good and ends up paying $p$; his payoff is $-p$ if doesn't obtain the object and ends up paying $p$. The outcome space is $O=\{0,1\} \times \mathbb{R_{+}}$. 

In a direct mechanism, the outcome function is $\delta:\Theta \to \Delta(O)=\Delta(\{0,1\} \times \mathbb{R_{+}})$.  Any incentive-compatible and individual-rational mechanism can be implemented by a randomized posted price.\footnote{Note that we restrict the transfer to be nonnegative, and the lowest type's value is zero, so this statement holds.} Formally, the outcome function of a posting price mechanism with deterministic price $p$ is 
\[
\delta^p(\theta) :=
\begin{cases}
1_{(1,p)}, & \text{if } \theta \ge p,\\[6pt]
1_{(0,0)}, & \text{if } \theta < p, 
\end{cases}
\]
where $1_{(1,p)}$ and $ 1_{(0,0)}$ are Dirac measures. It is assumed here that tie-breaking favors trade. Therefore, $\delta^{p}_{\pi_0}$ , the empirical observation under the true prior $\pi_0$, is given by \[\delta_{\pi_0}^p=\pi_0([p,1])1_{(1,p)}+(1-\pi_0([p,1]))1_{(0,0)}.\]
This means that the seller observes the probability of the buyer's types at or above $p$. We write  $\mathcal{D}^p=(\Theta,\delta^p)$ as the direct mechanism that represents the mechanism of posting price $p$. 

 A mechanism with randomized price $\tilde{p} \in \Delta([0,1])$ has an outcome function $\delta^{\tilde{p}}: \Theta \to \Delta(O)$ such that for any $\theta \in \Theta$ and any measurable $E \subseteq O$,
\begin{equation*}
\label{eq: randomizedpostedprice}
    \delta^{\tilde{p}}(\theta)(E):=\displaystyle{\int}_0^1 \delta^p(\theta)(E) \tilde{p}(dp).
\end{equation*}
We write $\mathcal{D}^{\tilde{p}}=(\Theta,\delta^{\tilde{p}})$ as the direct mechanism that represents the mechanism of random posting price $\tilde{p}$.  

The following results characterize the set of robustly self-confirming mechanisms under some regularity conditions of $\pi_0$.

\begin{theorem}
\label{t3}
     Suppose the true prior distribution $\pi_0$ is atomless and admits an analytic distribution function with full support on $[0,1]$. Then a mechanism  $\mathcal{D}^{\tilde{p}}=(\Theta,\delta^{\tilde{p}})$ is robustly $\pi_0$-self-confirming if and only if the following two conditions are satisfied:

     \textup{(i)} All prices lead to the same revenue, i.e., if $ p_1, p_2 \in \supp \tilde{p}$, then    
          $ p_1  \pi_0([p_1,1])=p_2 \pi_0([p_2,1])$.

    \textup{(ii)} All prices are local maximizers of the revenue function, i.e., if  $p\in \supp\tilde{p} $, then $p$ is a local maximizer of $\theta  \pi_0([\theta,1]).$
   
\end{theorem}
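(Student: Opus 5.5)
By Corollary \ref{cor:least-informative-filter} (applicable since $n=1$), I will work throughout with the least informative filter $\varphi\sim\delta^{\tilde p}$. The first step is to compute the feasible set. The seller observes the joint law of (trade, price), so she learns $R(p):=\pi([p,1])$ for every $p\in\supp\tilde p$, and nothing else. Hence
\[
\Pi^{\mathcal D}(\varphi,\varphi_{\pi_0})=\{\pi:\ \pi([p,1])=\pi_0([p,1])\ \text{for }\tilde p\text{-a.e. } p\}.
\]
The $\varepsilon$-truth perturbation adds the constraint $\pi|_A=\pi_0|_A$ for some Borel $A$ with $\pi_0(A)<\varepsilon$. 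Self-confirmation then means there is such a $\pi$ for which $\tilde p$ is an optimal randomized posted price. Since the IC/IR mechanisms are randomized posted prices, the seller's value under $\pi$ is $\sup_q q\,\pi([q,1])$. So the requirement is that $\tilde p$ is supported on maximizers of $q\,\pi([q,1])$.

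\emph{Sufficiency.} Assume (i)–(ii). By (ii) and analyticity, every support point $p$ is an isolated strict-or-flat local max. Analytic $r(q)=q\pi_0([q,1])$ cannot be constant on an interval unless it is constant everywhere, which is impossible since $r(0)=0$. So local maxima are isolated. The support lies in the compact set of local maxima of a common value $r^*$, and that set is therefore finite: $\{p_1,\dots,p_k\}$. Around each $p_j$ choose a neighborhood $U_j$ on which $r\le r^*$.

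Given any $A$ with $\pi_0(A)<\varepsilon$, I will construct $\pi$ as follows. It agrees with $\pi_0$ on $A$ and on the neighborhoods $U_j$, and keeps $\pi([p_j,1])$ fixed. Outside these sets, I move the remaining mass of each gap interval between consecutive $p_j$'s down toward its left end, leaving the masses $\pi([p_j,p_{j+1}))$ unchanged. Pushing mass in $(a,b)$ to just above $a$ keeps $\pi([q,1])$ bounded by its value at the gap's left end times roughly 1, so $q\pi([q,1])$ on the gap stays at most $\max\{r^*,\ q\cdot(\text{tail})\}$. The part of $A$ in the gap has mass less than $\varepsilon$ and must stay where it is. That contributes at most $q\varepsilon\le\varepsilon$ extra to tails above $U_j$'s edge. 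Because $r<r^*-\eta$ just outside $U_j$ (by strictness after shrinking), a small $\varepsilon$ suffices.

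The main obstacle is making this quantitative and uniform over all $A$. The worry is that $A$ may be spread out, forcing tails that cannot be reduced. The bound I would use is
\[
q\,\pi([q,1])\le q\bigl(\pi_0([p_{j+1},1])+\pi_0(A\cap[q,p_{j+1}))+\text{mass kept in }U_{j+1}\bigr).
\]
I would choose $U$'s and $\varepsilon$ from compactness and the margin $\eta$. Near $q=0$ or below $p_1$, put the free mass at $0$. One more subtlety must be handled: the constraint involves $[q,1]$ including atoms, so the pushed mass must not sit at a support point.

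\emph{Necessity.} Suppose $\tilde p$ is robustly self-confirming, and let $p\in\supp\tilde p$ fail (ii). Then there are $q_n\to p$ with $r(q_n)>r(p)$. I take $A$ to be a tiny interval around $p$ (mass less than $\varepsilon$) containing some $q$ with $r(q)>r(p)$. Any admissible $\pi$ agrees with $\pi_0$ on $A$ and has the same tail at $p$, so $\pi([q,1])=\pi_0([q,1])$ for $q\in A$. Then $q\pi([q,1])=r(q)>r(p)=p\pi([p,1])$, and $p$ is not optimal. Since this holds for a positive-measure set of support points nearby (or at an atom), $\tilde p$ is not optimal, which is a contradiction.

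For (i), observe that under any feasible $\pi$ the revenue at a support price equals $r(p)$, since the tail is pinned. Optimality of $\tilde p$ forces all support points to give equal, maximal revenue, which is (i). Passing from $\tilde p$-a.e. to every point of the support uses the continuity of $r$.
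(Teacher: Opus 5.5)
Your proposal is correct and follows the paper's proof in all essentials: the same $\tilde p$-a.e.\ tail characterization of the feasible set, necessity of (i) from pinned tails plus continuity of $r$, necessity of (ii) by revealing $\pi_0$ on a small interval at a non-locally-optimal price, and sufficiency via a belief that copies $\pi_0$ on $A$ and near the support while pushing the remaining mass down to the left ends of the gaps and toward $0$. Your variations are legitimate simplifications: a two-sided interval around $p$ pins tails on all of $A$ and so bypasses Lemma~\ref{lemma:equaltail}, and analyticity gives a finite support with a strict margin $\eta$, replacing the paper's buffer intervals and its Step 4.2 decomposition. Two points need care: tails are pinned only $\tilde p$-a.e.\ (not at every support point, since $\pi$ may carry an atom at the left end of a non-atomic part of $\supp\tilde p$), so derive $\pi([q,1])=\pi_0([q,1])$ on $A$ from $\pi=\pi_0$ on that interval; and take $\varepsilon$ below the $\pi_0$-mass of short intervals just right of each $U_j$ and of $(0,r^*]$, so the displaced mass can always be placed off $A$.
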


While the proof is rather complex and involves identifying all perturbed filters and checking their induced sets of feasible priors, the characterization formalizes a simple intuition, at least in one direction. If a posted price is not locally optimal, then revealing even a small amount of information in the neighborhood of that price allows the designer to profitably deviate. By contrast, locally optimal prices are stable to informational perturbations. Analyticity rules out an infinite number of local maximizers within small intervals that would otherwise invalidate the perturbation argument.

Robust self-confirmation thus provides one rationale for when and which suboptimal mechanisms persist. In the special cases typically studied by economists and computer scientists, the set of robust self-confirming mechanisms becomes further restricted.

\begin{corollary}
    Suppose the true prior distribution $\pi_0$ is atomless and admits an analytic distribution function with full support on $[0,1]$. If in addition the virtual value function is non-decreasing, then a mechanism is robustly  $\pi_0$-self-confirming if and only if it is an optimal mechanism when $\pi_0$ is known to the designer.
\end{corollary}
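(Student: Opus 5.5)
The plan is to derive the corollary directly from \Cref{t3}, using the standard fact that a non-decreasing virtual value makes the revenue curve $R(\theta):=\theta\,\pi_0([\theta,1])$ single-peaked. Write $F$ for the (analytic, atomless, full-support) distribution function of $\pi_0$, $f=F'$, and $\phi(\theta)=\theta-(1-F(\theta))/f(\theta)$ for the virtual value. Then $R(\theta)=\theta(1-F(\theta))$ and
\[
R'(\theta)=(1-F(\theta))-\theta f(\theta)=-f(\theta)\,\phi(\theta)
\]
wherever $f(\theta)>0$.

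The first step is to show that $R$ has a unique local maximizer $p^*$ on $[0,1]$, and that $p^*$ is also the global maximizer. Since $\phi$ is non-decreasing, $\phi\le 0$ to the left of some point and $\phi\ge 0$ to the right. Hence $R$ is non-decreasing and then non-increasing.

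To rule out flat stretches, which would create a continuum of local maximizers, I use analyticity. If $R'\equiv 0$ on an interval, then $R$ is constant there, so by analyticity $R$ is constant on all of $[0,1]$. That contradicts $R(0)=0<R(\theta)$ for small $\theta>0$, which holds by full support. So the zeros of $R'$ are isolated, and monotonicity of $\phi$ shows $R'$ changes sign at most once, from $+$ to $-$. This gives strict quasi-concavity of $R$ and a unique local (and hence global) maximizer $p^*$.

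I expect the main delicacy to be the endpoint cases and points where $f=0$. Points with $f=0$ are isolated by analyticity, so $R$ stays strictly monotone across them. The boundary maximizer $p^*=1$ is impossible because $R(1)=0$, and $p^*=0$ is impossible because $R(0)=0$.

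Next I apply \Cref{t3}. Condition (ii) forces $\supp\tilde p\subseteq\{p^*\}$, so $\tilde p=1_{p^*}$. Condition (i) then holds trivially. Conversely, $\tilde p=1_{p^*}$ satisfies both conditions, so the robustly $\pi_0$-self-confirming mechanisms are exactly $\mathcal{D}^{p^*}$.

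Finally, I identify $\mathcal{D}^{p^*}$ with the optimal mechanism under known $\pi_0$. As noted before \Cref{t3}, every incentive-compatible mechanism is a randomized posted price, and its revenue is $\int R\,d\tilde p\le R(p^*)$. Equality holds iff $\tilde p$ concentrates on $\arg\max R=\{p^*\}$, so the optimal mechanisms are exactly $\mathcal{D}^{p^*}$. This is the Myerson mechanism, and the two sets coincide.
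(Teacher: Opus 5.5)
Your proposal is correct and follows the route the paper indicates in its introduction: a non-decreasing virtual value, together with analyticity, gives a unique local maximizer $p^*$ of $\theta\,\pi_0([\theta,1])$, so \Cref{t3} forces $\tilde p=1_{p^*}$, the Myerson price. The paper gives no separate proof beyond that remark. One small slip: in your last paragraph, the reduction to randomized posted prices needs individual rationality as well as incentive compatibility, as the paper states before \Cref{t3}; without IR, a constant nonnegative fee is incentive compatible and revenue is unbounded.
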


\section{Conclusion}

This paper frames mechanism design as a fixed-point problem where the designer must learn from the information generated by the mechanism itself. We establish a \textit{fictitious revelation principle}, demonstrating that any incentive-compatible mechanism can be represented by a direct mechanism with a filter that garbles reports to preserve informational content. This framework allows us to characterize equilibrium mechanisms that remain optimal given the empirical evidence they produce.

To mitigate the permissiveness of self-confirming beliefs, we introduce the \textit{grain of truth} refinement, which requires stability even if the true distribution is revealed on a set of small measure. In the single-buyer context, this refinement eliminates prices that are not locally revenue-maximizing, as local information would otherwise incentivize deviation. This result provides an endogenous justification for the Myerson mechanism. Future research may extend this logic to multi-agent environments, to settings with a monopolist who has private information, or to models that explicitly capture the dynamic experimentation required to reach these equilibria.

\appendix

\section{Proofs for Fictitious Revelation Principle}
\label{appendixa}

\subsection{Kernel vs. Blackwell Informativeness Orders}\label{sec:Kernel-vs-Blackwell}
In this section, we prove the following claim in \Cref{rem:blackwell-vs-kernel}, which we summarize below.
\begin{lemma}
    If $g$ is Blackwell more informative than $h$, then $g \succsim h$; but the converse is not true.
\end{lemma}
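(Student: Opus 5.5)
The forward direction follows from linearity of averaging. Suppose $g: X \to \Delta(Y)$ is Blackwell more informative than $h: X\to\Delta(Z)$. Then there is a garbling $k: Y \to \Delta(Z)$ with $h = k\circ g$. For any $\mu\in\Delta(X)$, Fubini together with \eqref{eq:compound2} gives
\[
h_\mu(E)=\int_X\int_Y k(y)(E)\,g(x)(dy)\,\mu(dx)=\int_Y k(y)(E)\,g_\mu(dy)=k_{g_\mu}(E).
\]
So $h_\mu$ depends on $\mu$ only through $g_\mu$. Now take $\mu_0$ and $\mu$ with $g_\mu=g_{\mu_0}$. Then $h_\mu=k_{g_\mu}=k_{g_{\mu_0}}=h_{\mu_0}$, which is exactly the inclusion \eqref{eq:informationorder}, so $g\succsim h$. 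The only technical points are the measurability of $(x,E)\mapsto$ the kernels, needed to apply Fubini/Tonelli. These are standard for kernels on standard Borel spaces.

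For the failure of the converse I will give a finite counterexample. The idea is that kernel informativeness only compares the null spaces of the linear maps $\mu\mapsto g_\mu$ restricted to signed measures of total mass zero. It says nothing about whether the map from $g_\mu$ to $h_\mu$ is realized by a positive (stochastic) kernel.

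The cleanest choice is an injective $g$ that is not Blackwell more informative than some $h$. Take $X=\{1,2\}$ and $Y=\{a,b\}$, with $g(1)=(3/4,1/4)$ and $g(2)=(1/4,3/4)$. These rows are linearly independent, so $g_\mu=g_{\mu_0}$ forces $\mu=\mu_0$. Hence the left side of \eqref{eq:informationorder} is $\{\mu_0\}$, and $g\succsim h$ holds for every $h$ on $X$. Take $h$ to be the identity on $X$, i.e. $h(x)=1_x$.

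If $g$ were Blackwell more informative than $h$, there would be a stochastic matrix $k$ with $k\circ g=h$, i.e. $gk=I$ as matrices. But $g^{-1}$ is unique and has negative entries: it is $\begin{pmatrix}3/2&-1/2\\-1/2&3/2\end{pmatrix}$. So no stochastic $k$ exists. Equivalently, a noisy signal cannot be garbled into a perfectly revealing one, since perfect revelation is Blackwell maximal and strictly dominates $g$.

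The main obstacle is keeping the distinction clear. In the forward direction, one must ensure the garbling identity passes through averages, which is just Fubini. In the counterexample, one must make sure the non-existence argument is airtight, which uniqueness of the matrix inverse settles. One could also note that the same example shows $g\sim h$ (both are injective) while $h$ strictly Blackwell-dominates $g$, which further illustrates the gap.
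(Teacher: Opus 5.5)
Your proposal is correct and follows the paper's proof: the forward direction is the same Fubini computation showing $h_\mu$ is the garbling of $g_\mu$, and the converse uses the same device of an invertible noisy $G$ (so $g\succsim h$ holds trivially) whose unique candidate garbling $HG^{-1}$ has negative entries. The only difference is the example. The paper's $3\times 3$ example also has $h\not\succsim g$, so $g\succ h$ strictly in the kernel order while $g$ and $h$ are Blackwell-incomparable. Your $2\times 2$ example instead has $g\sim h$ with $h$ strictly Blackwell-dominating $g$, and either one refutes the converse.
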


\begin{proof}
Suppose $g : X \to \Delta(Y)$ and $h : X \to \Delta(Z)$ are stochastic mappings, and that $g$ is Blackwell more informative than $h$. By definition, there exists a stochastic mapping
\(
b : Y \to \Delta(Z)
\)
such that
\(
h = b \circ g,
\)
that is, for every $x \in X$ and every measurable set $E \subseteq Z$,
\[
h(x)(E) = \int_Y b(y)(E)\, g(x)(dy).
\]
For any prior $\mu \in \Delta(X)$, 
we obtain for any measurable $E \subseteq Z$,
\begin{align*}
h_\mu(E)
&= \int_X h(x)(E)\, \mu(dx) \\
&= \int_X \left( \int_Y b(y)(E)\, g(x)(dy) \right) \mu(dx) \\
&= \int_Y b(y)(E)\, g_\mu(dy),
\end{align*}
where the last equality follows from Fubini's theorem. Hence
\[
h_\mu = b \circ g_\mu.
\]
Fix $\mu_0 \in \Delta(X)$ and suppose $\mu \in \Delta(X)$ satisfies
\(
g_\mu = g_{\mu_0}.
\)
Then
\[
h_\mu
= b \circ g_\mu
= b \circ g_{\mu_0}
= h_{\mu_0}.
\]
Therefore,
\(
\{\mu \in \Delta(X) : g_\mu = g_{\mu_0}\}
\subseteq
\{\mu \in \Delta(X) : h_\mu = h_{\mu_0}\}.
\)
Thus $g \succsim h$.

To show that the converse fails, we construct an example in which $g \succ h$ but $g$ is not Blackwell more informative than $h$. To this end, let $X=\{x_1,x_2,x_3\}$, $Y=\{y_1,y_2,y_3\}$, and $Z=\{z_1,z_2,z_3\}$. We can represent the stochastic mappings $g:X\to\Delta(Y)$ and $h:X\to\Delta(Z)$ by the column-stochastic matrices
\[
G = (g(y\mid x))_{y\in Y,\, x\in X},
\qquad
H = (h(z\mid x))_{z\in Z,\, x\in X}.
\]
Specifically, consider
\[
G=
\begin{pmatrix}
0.8 & 0.1 & 0.1 \\
0.1 & 0.8 & 0.1 \\
0.1 & 0.1 & 0.8
\end{pmatrix},
\qquad
H=
\begin{pmatrix}
1 & 0 & 0.5 \\
0 & 1 & 0.5 \\
0 & 0 & 0
\end{pmatrix}.
\]

We claim that $g \succsim h$. To see this, note that since $G$ is invertible, for any $\mu_0\in\Delta(X)$, we have
\[
\{\mu \in \Delta(X): G\mu=G\mu_0\}=\{\mu_0\}.
\]
Hence
\(
\{\mu \in \Delta(X): G\mu=G\mu_0\}
\subseteq
\{\mu \in \Delta(X): H\mu=H\mu_0\},
\)
so $g \succsim h$.

Next, we claim that $g \succ h$. To see this, consider 
\[
\mu_0=\begin{pmatrix}1/3\\1/3\\1/3\end{pmatrix},
\qquad
\bar\mu=\begin{pmatrix}1/2\\1/2\\0\end{pmatrix}.
\]
Then 
\[
H\mu_0=
\begin{pmatrix}1/2\\1/2\\0\end{pmatrix}
=
H\bar\mu,
\]
but
\[
G\mu_0=
\begin{pmatrix}1/3\\1/3\\1/3\end{pmatrix}
\neq
G\bar\mu=
\begin{pmatrix}9/20\\9/20\\1/10\end{pmatrix}.
\]
Thus $h \not\succsim g$, and therefore $g \succ h$.

Finally, we show that $g$ is not Blackwell more informative than $h$. Suppose to the contrary that $g$ is Blackwell more informative than $h$. Then there would exist a stochastic matrix $S$ such that $H=SG$. Since $G$ is invertible, necessarily $S=HG^{-1}$, and
\[
HG^{-1}
=
\frac{1}{14}
\begin{pmatrix}
17 & -3 & 7 \\
-3 & 17 & 7 \\
0 & 0 & 0
\end{pmatrix}.
\]
This matrix has negative entries and is therefore not stochastic, a contradiction.
\end{proof}
\subsection{Proof of Theorem \ref{thm:deterministic}}

\begin{proof} 
We start with the easier direction. For any deterministic fictitious $(\mathcal{D}, \varphi)$, let $M:=\Theta, \sigma:=\varphi$, and $\omega=\delta$. The resulting augmented mechanism $(\mathcal{M}, \sigma)$ is deterministic. Furthermore, by construction, $\sigma \sim \varphi$, and $ \delta \circ \varphi=\omega \circ \sigma=\delta$. 

Conversely, for any deterministic $(\mathcal{M}, \sigma)$, let $\delta:=\omega\circ\sigma$. Since $\sigma$ is a pure strategy profile, for any $i$, $\sigma^i$ partitions $\Theta^i$ through  $\{(\sigma^i)^{-1}(\sigma^i(\theta^i)):\theta^i \in \Theta^i\}$. For any $i$, there exists a Borel measurable function\footnote{Since $\sigma^i$ is Borel measurable between standard Borel spaces, 
it admits a Borel measurable right inverse $r^i$ on its range 
(see \citet{kechris1995classical}, Theorem~18.1). 
Then $\varphi^i := r^i \circ \sigma^i$ is the desired selector, which is Borel measurable.} $\varphi^i:\Theta^i \to \Theta^i$ such that 
\begin{equation} \label{eq:deterministicphii} \varphi^i(\theta^i)\in (\sigma^i)^{-1}(\sigma^i(\theta^i)) \text{ and } \varphi^i(\theta^i)=\varphi^i(\tilde{\theta}^i) \text{ iff } \sigma^i(\theta^i)=\sigma^i(\tilde{\theta}^i). \end{equation} 
By \eqref{eq:deterministicphii}, $\sigma(\varphi(\theta))=\sigma(\theta)$ for any $\theta \in \Theta$. Therefore, $\delta(\varphi(\theta))=\omega(\sigma(\varphi(\theta)))=\delta(\theta)$. Thus, $\varphi$ is Blackwell more informative than $\delta$. 

 To show that $(\mathcal{M}, \sigma)$ is represented by $(\mathcal{D},\varphi)$, it remains to check $\varphi \sim \sigma$. We regard the deterministic filter $\varphi$ as a stochastic mapping $\theta \mapsto 1_{\varphi(\theta)}$ and $\sigma$ as a stochastic mapping $\theta \mapsto 1_{\sigma(\theta)}$. Therefore,
 \begin{align*}
\varphi_\pi = \varphi_{\pi_0}
&\textup{ iff }
\pi(\varphi^{-1}(\cdot))
=
\pi_0(\varphi^{-1}(\cdot))
\in \Delta(\Theta); \\
\sigma_\pi = \sigma_{\pi_0}
& \textup{ iff }
\pi(\sigma^{-1}(\cdot))
=
\pi_0(\sigma^{-1}(\cdot))
\in \Delta(M).
\end{align*}
  By \eqref{eq:deterministicphii}, $\varphi$ and $\sigma$ are measurable with respect to each other, and hence they generate the same $\sigma$-algebra. Therefore, $\varphi_\pi=\varphi_{\pi_0}$ iff $\sigma_\pi=\sigma_{\pi_0}$. That is, $\varphi \sim \sigma$. 
\end{proof}


\subsection{Proof of Theorem \ref{thm:weak-RP}}

\begin{proof}
We start with the easier direction. For any weak fictitious direct mechanism $(\mathcal{D}, \varphi)$, by Definition \ref{def:fictitious}, there exists $f=(f^1,...,f^n)$,  where $f^i: \Theta^i \rightarrow \Delta(M^i_f)$, such that $f \sim \varphi$ and $f$ is Blackwell more informative than $\delta$. Therefore, there exists $g: M_f \rightarrow \Delta(O)$, where $M_f=\prod \limits_{i=1}^n M^i_f,$ such that $\delta= g \circ f$. Define $(\mathcal{M}, \sigma)=(M,\omega,\sigma)$ as $(M_f,g,f)$.  By construction, $\sigma \sim f \sim \varphi$ and $\omega \circ \sigma=g \circ f=\delta$. Thus $(\mathcal{M}, \sigma)$ is equivalent to $(\mathcal{D}, \varphi)$. 

Now we show the other direction. For any augmented mechanism $(\mathcal{M},\sigma)$, consider a direct mechanism $\mathcal{D}$ such that $\delta=\omega \circ \sigma$.  It remains to show the existence of $\varphi$ and $f$ such that $\varphi \sim \sigma$ and $\varphi \sim f$ and $f$ is Blackwell more informative than $\delta$.  

Fix $i\in N$, we write $M^i=\{m^i_1,...,m^i_{|M^i|}\}$ and $ \Theta^i=\{\theta^i_1,...,\theta^i_{|\Theta^i|}\}$. Any $\pi^i \in \Delta(\Theta^i)$ can be understood as a (column) probability vector, and any $\sigma^i:\Theta^i \to \Delta(M^i)$ can be written as an $|M^i| \times {|\Theta^i|}$ column-stochastic matrix $\Sigma^i$, with the entry on the $j$th row and $k$th column being \begin{equation*}  \Sigma^i_{jk}=\sigma^i(\theta^i_k)(m^i_j). 
    \end{equation*} 
Similarly, any $\varphi^i: \Theta^i \rightarrow \Delta(\Theta^i)$ can be written as a ${|\Theta^i|} \times {|\Theta^i|}$ column-stochastic matrix $\Phi^i$,
with the entry on the $j$th row and $k$th column being \begin{equation*} \label{eqaa3} \Phi^i_{jk}=\varphi^i(\theta^i_k)(\theta^i_j). \end{equation*} 
We write the kernels (or null spaces) of $\Sigma^i$ and $\Phi^i$ as  \begin{equation*}
    \begin{split}
        \ker \Sigma^i:=\{x \in \mathbb{R}^{{|\Theta^i|}}: \Sigma^ix=0\}, \\
        \ker \Phi^i:= \{ x \in \mathbb{R}^{{|\Theta^i|}}: \Phi^ix=0\}.
        \end{split}
    \end{equation*}

The following lemma offers an equivalent characterization of $\varphi^i \sim \sigma^i$. 

\begin{lemma}
\label{lemma:kernel}
    For any $i, \varphi^i \sim \sigma^i$ if and only if $\ker \Sigma^i=\ker \Phi^i$.
   
\end{lemma}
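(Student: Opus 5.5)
The plan is to translate kernel equivalence into a statement about linear maps on the probability simplex, and then compare null spaces. By the definition of $\succsim$, the relation $\varphi^i \sim \sigma^i$ says: for every $\mu_0 \in \Delta(\Theta^i)$ and every $\mu \in \Delta(\Theta^i)$, we have $\Sigma^i\mu = \Sigma^i\mu_0$ if and only if $\Phi^i\mu=\Phi^i\mu_0$. Equivalently, writing $x=\mu-\mu_0$, the condition $\Sigma^i x = 0$ holds exactly when $\Phi^i x=0$, for all differences $x$ of two probability vectors. So the proof reduces to showing that equality of the kernels on the set of such differences is the same as equality of the full kernels.

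\emph{Direction ($\Leftarrow$).} If $\ker\Sigma^i=\ker\Phi^i$, then for any $\mu,\mu_0$ in the simplex, $\Sigma^i(\mu-\mu_0)=0$ holds iff $\Phi^i(\mu-\mu_0)=0$. This says precisely that the two feasible sets coincide for every $\mu_0$, hence $\varphi^i\sim\sigma^i$.

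\emph{Direction ($\Rightarrow$).} Assume $\varphi^i\sim\sigma^i$ and take $x\in\ker\Sigma^i$; we want $\Phi^i x=0$, and the reverse inclusion follows by symmetry. The key observation is that both matrices are column-stochastic, so $\mathbf 1^\top\Sigma^i=\mathbf 1^\top$ and $\mathbf 1^\top \Phi^i = \mathbf 1^\top$. Hence $\Sigma^i x=0$ forces $\mathbf 1^\top x=0$; the same holds for $\Phi^i$. Thus both kernels lie in the hyperplane $H=\{x:\mathbf 1^\top x=0\}$.

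Now fix $\mu_0$ to be the uniform vector, which lies in the interior of the simplex. For small enough $t>0$, the vector $\mu=\mu_0+tx$ is a probability vector, since it has sum $1$ and nonnegative entries. We have $\Sigma^i\mu=\Sigma^i\mu_0$, so kernel equivalence gives $\Phi^i\mu=\Phi^i\mu_0$. Therefore $t\Phi^i x=0$, and so $\Phi^i x=0$. This proves $\ker\Sigma^i\subseteq\ker\Phi^i$, and symmetry gives equality.

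The main subtlety is the $\Rightarrow$ direction. Kernel equivalence only constrains differences of probability vectors, which span $H$ rather than all of $\mathbb R^{|\Theta^i|}$. The column-stochastic property bridges this gap, because it places both kernels inside $H$. An interior base point together with a scaling argument then recovers arbitrary kernel vectors.
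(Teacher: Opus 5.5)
Your proof is correct and follows the paper's argument. The ``if'' direction is identical, and for ``only if'' both use column-stochasticity to get $\mathbf{1}^\top x = 0$ for $x \in \ker\Sigma^i$ and then write a positive multiple of $x$ as a difference of two priors; the only difference is that the paper takes these priors to be the normalized positive and negative parts of $x$, whereas you perturb the uniform prior by $tx$ for small $t>0$.
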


\begin{proof}

To see the ``if'' direction, note that if $\ker \Sigma^i=\ker \Phi^i$, then for any $\pi^i, \pi_0^i \in \Delta(\Theta^i), $ we have the following 
\begin{equation*}
    \Sigma^i \pi^i=\Sigma^i \pi_0^i \;\Longleftrightarrow\; \pi^i-\pi_0^i \in \ker \Sigma^i \;\Longleftrightarrow\;  \pi^i-\pi_0^i \in \ker \Phi^i \;\Longleftrightarrow\;  \Phi^i\pi^i=\Phi^i\pi_0^i.
\end{equation*}
Hence $\varphi^i \sim \sigma^i$.

Now we prove the ``only if'' direction. For any nonzero vector $x:=(x_1,\ldots,x_{|\Theta^i|})^\top \in \ker \Sigma^i$, let
\begin{align*}
J^+(x) &:= \{j \in \{1,\ldots,{|\Theta^i|}\} : x_j>0\},\\
J^-(x) &:= \{j \in \{1,\ldots,{|\Theta^i|}\} : x_j<0\}.
\end{align*}
Since $\Sigma^i$ is column-stochastic, we have $\mathbf{1}^\top \Sigma^i=\mathbf{1}^\top$, where
$\mathbf{1}=(1,\ldots,1)^\top$. Hence, for any $x \in \ker \Sigma^i$,
\[
0=\mathbf{1}^\top \Sigma^i x=\mathbf{1}^\top x=\sum_{j=1}^{{|\Theta^i|}} x_j.
\]
It follows that if $x\neq 0$, then $J^+(x)\neq \emptyset$ and $J^-(x)\neq \emptyset$. Moreover,
\begin{equation}\label{eq:sum}
    \sum_{j\in J^+(x)} x_j + \sum_{j\in J^-(x)} x_j = 0.
\end{equation}

Now define $\pi^i,\pi_0^i \in \Delta(\Theta^i)$ by
\begin{equation}
\label{eq:probvector}
\pi^i_j :=
\begin{cases}
\dfrac{x_j}{\sum_{k\in J^+(x)} x_k}, & \text{if } j\in J^+(x),\\[1ex]
0, & \text{otherwise},
\end{cases}
\qquad
\pi_{0,j}^i :=
\begin{cases}
\dfrac{-x_j}{-\sum_{k\in J^-(x)} x_k}, & \text{if } j\in J^-(x),\\[1ex]
0, & \text{otherwise}.
\end{cases}
\end{equation}
This construction is well defined. Since $x_j>0$ for each $j\in J^+(x)$ and $x_j<0$ for each
$j\in J^-(x)$, both $\pi^i$ and $\pi_0^i$ are probability vectors. Moreover, by \eqref{eq:sum} and \eqref{eq:probvector},
\[
\pi^i-\pi_0^i=\frac{x}{\sum_{k\in J^+(x)} x_k}.
\]
Since $x \in \ker \Sigma^i$, it follows that $\pi^i-\pi_0^i \in \ker \Sigma^i$, and hence
\begin{equation}\label{eq:ker-sigma}
    \Sigma^i \pi^i = \Sigma^i \pi_0^i.
\end{equation}

If $\varphi^i \sim \sigma^i$, then \eqref{eq:ker-sigma} implies $\Phi^i \pi^i = \Phi^i \pi_0^i.$
Therefore, $\pi^i-\pi_0^i \in \ker \Phi^i$. Since $\pi^i-\pi_0^i$ is a nonzero scalar multiple of $x$,
we conclude that $x \in \ker \Phi^i$. Thus, $\ker \Sigma^i \subseteq \ker \Phi^i$. Repeating the same
argument with the roles of $\Sigma^i$ and $\Phi^i$ reversed yields $\ker \Phi^i \subseteq \ker \Sigma^i$.
Hence,
\[
\ker \Sigma^i=\ker \Phi^i.
\]
The proof for the ``only if'' direction is complete. 
\end{proof}

    To finish the proof of Theorem \ref{thm:weak-RP}, there are several cases to consider.
    
    \textbf{Case 1:} $|M^i| \leq {|\Theta^i|}$. 
    In this case, it is without loss of generality to identify $M^i$ as a subset of $\Theta^i$. Define a filter $\varphi^i: \Theta^i \rightarrow \Delta(\Theta^i)$ as follows: for all $\theta^i \in \Theta^i$,
    \begin{equation}\label{eqaa1}
\varphi^i(\theta^i)(\tilde{\theta}^i)
:=
\begin{cases}
\sigma^i(\theta^i)(\tilde{\theta}^i), & \text{if } \tilde{\theta}^i \in M^i,\\
0, & \text{otherwise.} 
\end{cases}
\end{equation}
By definition,  $\ker \Phi^i= \ker \Sigma^i$, which implies that $\varphi^i \sim \sigma^i$. 

    
    \textbf{Case 2: } $|M^i|>{|\Theta^i|}= \rank \Sigma^i.$ 
    In this case, by the rank-nullity theorem, $\dim(\ker \Sigma^i)=0$. Hence, by Lemma \ref{lemma:kernel}, $\varphi^i \sim \sigma^i$ requires that $\dim (\ker \Phi^i)=0$, which can be satisfied by, for example, by a fully-revealing filter: \begin{equation} \label{eq:a4}  \Phi^i_{jk}=I_{|\Theta^i|}:=\begin{cases} 1, & \textup{ if } j=k, \\ 0, & \textup{ otherwise. } \end{cases}   \end{equation}
    
    \textbf{Case 3: } $|M^i|>{|\Theta^i|}> \rank \Sigma^i.$
    Let $\rank \Sigma^i=x<{|\Theta^i|}$, and write 
    \begin{equation*}
    \Sigma^i = (r_1, \ldots, r_{|M^i|})^{\top},
    \end{equation*} where $r_j$ is the $j$th row of $\Sigma^i$. Consider a transformation of $\Sigma^i$ that shifts $x$ maximally linearly independent rows of $\Sigma^i$ to the first $x$ rows: 
    \begin{equation*}
    \tilde{\Sigma^i}=\left( \tilde{r}_1 , \ldots , \tilde{r}_x , \ldots, \tilde{r}_{|M^i|}\right)^{\top},
    \end{equation*}
where $\left\{ \tilde{r}_1 , \ldots , \tilde{r}_{|M^i|}\right\}=\{r_1,...,r_{|M^i|}\}$, and $\left\{ \tilde{r}_1 , \ldots , \tilde{r}_x \right\}$ is a maximally linearly independent set of $\{r_1,...,r_{|M^i|}\}.$  By definition, for any $\mu^i \in \Delta(\Theta^i)$, we have
\begin{equation}
\label{eqn:Sigma}
    \{\nu^i \in \Delta(\Theta^i): \Sigma^i \nu^i=\Sigma^i\mu^i\}=\{\nu^i \in \Delta(\Theta^i): \left( \tilde{r}_1 , \ldots , \tilde{r}_x\right)^{\top} \nu^i= \left( \tilde{r}_1 , \ldots , \tilde{r}_x\right)^{\top} \mu^i\}. 
\end{equation}


Define $\Phi^i$ as follows:
    \begin{equation}
    \label{eq:a5}
        \Phi^i=\left( \tilde{r}_1, \ldots, \tilde{r}_{{|\Theta^i|}-1}, (\tilde{r}_{|\Theta^i|}+...+\tilde{r}_{|M^i|}) \right)^{\top},
    \end{equation}
i.e., keeping the first ${|\Theta^i|}-1$ rows of   $  \tilde{\Sigma^i}$ unchanged and summing up its remaining ${|M^i|}-{|\Theta^i|}+1$ rows  to form the $b$th row of $\Phi^i$. Since $  \tilde{\Sigma^i}$ is an ${|M^i|} \times {|\Theta^i|}$ column-stochastic matrix, $\Phi^i$ given by (\ref{eq:a5}) is a ${|\Theta^i|} \times {|\Theta^i|}$ column-stochastic matrix.  Since $x\leq {|\Theta^i|}-1$, $\left\{ \tilde{r}_1 , \ldots , \tilde{r}_{x} \right\} \subseteq \left\{ \tilde{r}_1 , \ldots , \tilde{r}_{{|\Theta^i|}-1} \right\}$. Therefore, 
\begin{equation}\label{eqn:Phi}
    \{\nu^i \in \Delta(\Theta^i): \Phi^i \nu^i=\Phi^i\mu^i\}=\{\nu^i \in \Delta(\Theta^i): \left( \tilde{r}_1 , \ldots , \tilde{r}_x\right)^{\top} \nu^i= \left( \tilde{r}_1 , \ldots , \tilde{r}_x\right)^{\top} \mu^i\}.
\end{equation}
It follows from \eqref{eqn:Sigma} and \eqref{eqn:Phi}  that $\varphi^i \sim \sigma^i$, which, by Lemma \ref{lemma:kernel}, is equivalent to $\ker \Phi^i=\ker \Sigma^i$.

To summarize, each of three cases guarantees the existence of $\varphi^i$ such that $\varphi^i \sim \sigma^i$, or equivalently, $\ker \Phi^i=\ker \Sigma^i$. Given $\sigma=(\sigma^1,...,\sigma^n)$ and the corresponding constructed $\varphi=(\varphi^1,...,\varphi^n)$, consider the column-stochastic matrices $\Sigma$ and $\Phi$ such that the entry on the $j$th row and $k$th column being
\[\Sigma_{jk}=\sigma(\theta_k)(m_j); \Phi_{jk}=\varphi(\theta_k)(\theta_j),\]
i.e., the matrices induced by the strategy profile and the filter profile. Note that by the independence of the strategy and the filter,

\begin{equation} \label{eq:kronecker} \Sigma=\otimes_{i=1}^n \Sigma^i; \Phi=\otimes_{i=1}^n \Phi^i,\end{equation}
where ``$\otimes $'' is the Kronecker product. Since for any $i, \ker \Sigma^i=\ker \Phi^i$, there exist matrices $G^i$ and $H^i$ such that
\begin{equation} \label{eq:samekernel} \Sigma^i=G^i\Phi^i; \Phi^i=H^i\Sigma^i. \end{equation}

Combining \eqref{eq:kronecker} with \eqref{eq:samekernel}, and using the property of the Kronecker product, 
\begin{equation} \label{eq:sigmaphi} \Sigma = (\otimes_{i=1}^n G^i) \Phi; \Phi=(\otimes_{i=1}^n H^i)\Sigma,\end{equation}
which further implies that $\ker \Sigma=\ker \Phi$. Note that, by the arbitrariness of $i$, Lemma \ref{lemma:kernel} can be stated as: $\varphi \sim \sigma$ if and only $\ker \Sigma=\ker \Phi$. Hence, \eqref{eq:sigmaphi} directly implies that the constructed $\varphi$ satisfies $\varphi \sim \sigma$. 

Finally, let $f:=\sigma$ in Definition \ref{def:fictitious}, then $\varphi \sim f$ and by $\delta=\omega \circ \sigma=\omega \circ f$. Therefore, $f$ is Blackwell more informative than $\delta$.

\end{proof}

\section{Proofs for Refinement}

 Since agents' type distributions are independent, we have
    \begin{align*}
\left\{\pi \in \prod\nolimits_{i \in N}\Delta(\Theta^i): \sigma_{\pi}=\sigma_{\pi_0} \right\}
&= \prod\nolimits_{i\in N}\left\{\pi^i \in \Delta(\Theta^i): \sigma^i_{\pi^i}=\sigma^i_{\pi_0^i}\right\}, \\
\left\{\pi \in \prod\nolimits_{i \in N}\Delta(\Theta^i): \varphi_{\pi}=\varphi_{\pi_0} \right\}
&= \prod\nolimits_{i\in N}\left\{\pi^i \in \Delta(\Theta^i): \varphi^i_{\pi^i}=\varphi^i_{\pi_0^i}\right\}.
\end{align*}
Therefore the following holds: 
\begin{lemma}
    \label{lemma:independent}
    Consider any strategy profile $\sigma=(\sigma^i)_{i\in N}$ and any filter $\varphi=(\varphi^i)_{i\in N}$. If the prior distributions of agents' types are independent, then $\sigma \sim \varphi$ if and only if $\sigma^i \sim \varphi^i$ for all $i\in N$.
\end{lemma}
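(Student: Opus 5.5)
The plan is to reduce the statement to the product decomposition displayed just before it. Under independence, \(\sigma\sim\varphi\) is understood in the sense of \Cref{def:xinformative}, with \(\Delta(\Theta)\) replaced by the product set \(\prod_{i\in N}\Delta(\Theta^i)\). In other words, for every product \(\pi_0=(\pi_0^1,\ldots,\pi_0^n)\),
\[
\Big\{\pi\in\textstyle\prod_{i}\Delta(\Theta^i):\sigma_\pi=\sigma_{\pi_0}\Big\}=\Big\{\pi\in\textstyle\prod_i\Delta(\Theta^i):\varphi_\pi=\varphi_{\pi_0}\Big\}.
\]
The two displayed identities preceding the lemma rewrite each side as a product of marginal sets \(\prod_i S^i_\sigma(\pi_0^i)\) and \(\prod_i S^i_\varphi(\pi_0^i)\). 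Here \(S^i_\sigma(\pi_0^i)=\{\pi^i:\sigma^i_{\pi^i}=\sigma^i_{\pi_0^i}\}\), and \(S^i_\varphi\) is defined analogously.

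First I would justify those identities, since they carry the whole argument. For product \(\pi\), the joint function gives \(\sigma_\pi=\bigotimes_i\sigma^i_{\pi^i}\), a product measure on \(M=\prod_i M^i\). Two product probability measures coincide if and only if all their marginals coincide. For the forward implication, take marginals. For the converse, note that the two measures agree on measurable rectangles, which form a \(\pi\)-system generating the product \(\sigma\)-algebra, and apply Dynkin's \(\pi\)–\(\lambda\) theorem. The same argument works for \(\varphi\).

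The "if" direction is then immediate. If \(S^i_\sigma(\pi_0^i)=S^i_\varphi(\pi_0^i)\) for every \(i\) and every \(\pi_0^i\), the products coincide.

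For "only if", I would fix \(i\) and \(\pi_0^i\), and complete \(\pi_0^i\) to a product prior \(\pi_0\) using arbitrary \(\pi_0^j\) for \(j\neq i\). Given \(\pi^i\in S^i_\sigma(\pi_0^i)\), set \(\pi=(\pi^i,\pi_0^{-i})\). This lies in the \(\sigma\)-set, so by hypothesis it lies in the \(\varphi\)-set. The product decomposition, or simply projecting onto the \(i\)-th marginal, then gives \(\pi^i\in S^i_\varphi(\pi_0^i)\). Symmetry gives the reverse inclusion, so \(\sigma^i\sim\varphi^i\).

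The only real care point is the "only if" step. It uses that each factor set is nonempty, because it contains \(\pi_0^j\). This is what lets me extract a single factor from an equality of products, and it is why I fill the other coordinates with \(\pi_0^{-i}\) rather than with arbitrary measures.
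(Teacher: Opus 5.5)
Your proposal is correct and follows the same route as the paper. The paper records the product decomposition of the feasible-prior sets under independence and reads the lemma off from it. You supply the details the paper leaves implicit: the identity $\sigma_\pi=\bigotimes_i\sigma^i_{\pi^i}$ and uniqueness of measures that agree on rectangles, which justify the decomposition, and the nonemptiness of each factor (filling the other coordinates with $\pi_0^{-i}$), which lets you extract a single coordinate in the ``only if'' direction.
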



\subsection{Proof of Lemma \ref{lemma:perturbedfilter}}
The proof strategy is as follows. First, for any $\prod_{i \in N} A^i=A \subseteq \Theta$ and any fictitious direct mechanism $(\mathcal{D}, \varphi)$, we construct an augmented mechanism $(\mathcal{M},\sigma)$ such that 
\begin{gather}
M^i = \Theta^i \times \Theta^i, \nonumber \\
\sigma^i : \Theta^i \rightarrow \Delta(\Theta^i \times \Theta^i), \nonumber \\
\left\{\pi^i \in \Delta(\Theta^i): \sigma^i_{\pi^i}=\sigma^i_{\pi_0^i}\right\}
= \Pi^{\mathcal{D}}\left(\varphi^i, \varphi^i_{\pi_0^i}\right)
\cap
\left\{\pi^i \in \Delta(\Theta^i): \pi^i|_{A^i}=\pi_0^i|_{A^i}\right\}.
\label{eq:sigmai}
\end{gather}
Note that the outcome function is irrelevant for our purpose here. We then invoke the strong form of the fictitious revelation principle on the augmented mechanism to show the existence of a fictitious direct mechanism with a filter  $\varphi^A=(\varphi^{A^1},...,\varphi^{A^n})\sim \sigma$, i.e.,
\begin{equation} \label{eq: phiasigma} \Pi^{\mathcal{D}}(\varphi^A, \varphi^A_{\pi_0})=\left\{\pi \in \prod\nolimits_{i \in N} \Delta(\Theta^i): \sigma_{\pi}=\sigma_{\pi_0} \right\}, \end{equation}
where $\varphi^{A^i}:\Theta^i \rightarrow \Delta(\Theta^i)$ and, by Lemma \ref{lemma:independent}, $\varphi^{A^i} \sim \sigma^i$. 

By Lemma \ref{lemma:independent} and \eqref{eq:sigmai}, 
\begin{equation}
\label{eq:sigmaindependence}
\begin{aligned}
\left\{\pi \in \prod_{i \in N} \Delta(\Theta^i) : \sigma_{\pi} = \sigma_{\pi_0}\right\}
&= \prod_{i \in N} \left(\Pi^{\mathcal{D}} (\varphi^i, \varphi^i_{\pi_0^i})
 \cap 
\left\{\pi^i \in \Delta(\Theta^i) : \pi^i|_{A^i} = \pi_0^i|_{A^i}\right\} \right) \\
&= \Pi^{\mathcal{D}}(\varphi, \varphi_{\pi_0})
 \cap 
\left\{\pi \in \prod\nolimits_{i \in N} \Delta(\Theta^i) : \pi^i|_{A^i} = \pi_0^i|_{A^i}, \; i\in N\right\}.
\end{aligned}
\end{equation}
Then \eqref{eq:perturbedfilter} follows from \eqref{eq: phiasigma} and \eqref{eq:sigmaindependence}. Therefore, in order to prove \Cref{lemma:perturbedfilter}, it suffices to construct $\sigma^i$ that will satisfy \eqref{eq:sigmai}.
\begin{proof} We proceed in two steps.

    \textbf{Step 1: }  Given  $\prod_{i \in N} A^i=A \subseteq \Theta$, and a fictitious direct mechanism $(\mathcal{D}, \varphi)$, for any $i$, we fix some $\bar{\theta}^i \in \Theta^i \backslash A^i$ and define $\sigma^i: \Theta^i \rightarrow \Delta(\Theta^i \times \Theta^i)$ as
    \begin{equation}\label{eqn:def_sigma}
    \sigma^i(\theta^i):=
    \begin{cases}
        {1}_{\bar{\theta}^i} \otimes  {1_{\theta^i}}, & \text{if } \theta^i \in A^i, \\
        \varphi^i(\theta^i) \otimes {1_{\bar{\theta}^i}}, & \text{otherwise,} 
    \end{cases}
    \end{equation}
where ${1_{\theta^i}}$ and $1_{\bar{\theta}^i}$ are Dirac measures, and $\otimes$ denotes the measure induced by product measures. Let $\sigma(\theta) =(\sigma^i(\theta^i))_{i\in I}$.

We now proceed to show that \eqref{eq:sigmai} holds for $\sigma^i$ constructed by \eqref{eqn:def_sigma}. 

\textbf{Step 2: } We show the ``$\supseteq$'' direction of \eqref{eq:sigmai}. 

For any $\pi^i \in \Delta(\Theta^i)$ belongs to the right-hand side of \eqref{eq:sigmai}, i.e., $\varphi^i_{\pi^i}=\varphi^i_{\pi_0^i}$ and $\pi^i|_{A^i}=\pi_0^i|_{A^i}$, and for any measurable ${E^i} \subseteq \Theta^i \times \Theta^i$, it follows from \eqref{eqn:def_sigma} that \begin{equation}\label{eqn:sigma_pi}
    \begin{split}
        \sigma^i_{\pi^i}\!({E}^i)
        &=\displaystyle{\int}_{\Theta^i} \sigma^i(\theta^i)\!({E}^i)\,\pi^i(d\theta^i) \\
        &= \displaystyle{\int}_{A^i} \sigma^i(\theta^i)\!({E}^i)\,\pi^i(d\theta^i)
        +\displaystyle{\int}_{\Theta^i \backslash A^i} \sigma^i(\theta^i)\!({E}^i)\,\pi^i(d\theta^i) \\
        &= \pi^i(E_1^i)+ \displaystyle{\int}_{\Theta^i \backslash A^i} \varphi^i(\theta^i)\!(E_2^i)\,\pi^i(d\theta^i),
    \end{split}
\end{equation}
where\[
\begin{array}{l}
E_1^i = \{\theta^i \in A^i : (\bar{\theta}^i, \theta^i) \in {E}^i\},\\[6pt]
E_2^i = \{\theta^i \in \Theta^i : (\theta^i, \bar{\theta}^i) \in {E}^i\}.
\end{array}
\]
It follows from the fact that $E_1^i$ is a measurable subset of $A^i$ and  $\pi^i|_{A^i}=\pi_0^i|_{A^i}$ that
\begin{equation}\label{eqn:pi_equ}
\pi^i(E_1^i)=\pi_0^i(E_1^i).
\end{equation}
In addition, since $E_2^i$ is a measurable subset of $\Theta^i$, it follows from  $\varphi^i_{\pi^i}=\varphi^i_{\pi_0^i}$ and $\pi^i|_{A^i}=\pi_0^i|_{A^i}$ that \begin{equation*}
\begin{aligned}
\int_{\Theta^i} \varphi^i(\theta^i)\!(E_2^i)\,\pi^i(d\theta^i)
&= \int_{\Theta^i} \varphi^i(\theta^i)\!(E_2^i)\,\pi_0^i(d\theta^i),\\[6pt]
\int_{A^i} \varphi^i(\theta^i)\!(E_2^i)\,\pi^i(d\theta^i)
&= \int_{A^i} \varphi^i(\theta^i)\!(E_2^i)\,\pi_0^i(d\theta^i).
\end{aligned}
\end{equation*}
Therefore,  \begin{equation}\label{eqn:varphi_A_complement}
\displaystyle{\int}_{\Theta^i \backslash A^i} \varphi^i(\theta^i)\!(E_2^i)\,\pi^i(d\theta^i)
= \displaystyle{\int}_{\Theta^i \backslash A^i} \varphi^i(\theta^i)\!(E_2^i)\,\pi_0^i(d\theta^i).
\end{equation}
Hence, by \eqref{eqn:sigma_pi}, \eqref{eqn:pi_equ}, and \eqref{eqn:varphi_A_complement},
\begin{equation*}
\sigma^i_{\pi^i}\!({E}^i)
=\pi_0^i(E_1^i)+\displaystyle{\int}_{\Theta^i \backslash A^i} \varphi^i(\theta^i)\!(E_2^i)\,\pi_0^i(d\theta^i)
=\sigma^i_{\pi_0^i}\!(E^i),
\end{equation*}
which implies that $\sigma^i_{\pi^i}=\sigma^i_{\pi_0^i}$. Therefore, $\pi^i$ belongs to the left-hand side of \eqref{eq:sigmai}.

\textbf{Step 3: } We now show the ``$\subseteq$'' direction of \eqref{eq:sigmai}. 

Consider any $\pi^i$ such that $\sigma^i_{\pi^i}=\sigma^i_{\pi_0^i}$. For any measurable $F^i \subseteq A^i \not\ni \bar{\theta}^{i}$, we have 
\begin{equation*}
\begin{aligned}
\{\theta^i \in A^i : (\bar{\theta}^{i}, \theta^i) \in \{\bar{\theta}^{i}\} \times F^i\} &= F^i,\\[6pt]
\{\theta^i \in \Theta^i : (\theta^i, \bar{\theta}^{i}) \in \{\bar{\theta}^{i}\} \times F^i\} &= \emptyset.
\end{aligned}
\end{equation*}
Therefore, by \eqref{eqn:def_sigma}, 
\begin{equation*}
    \pi^i(F^i)=\sigma^i_{\pi^i}(\{\bar{\theta}^{i}\} \times F^i)=\sigma^i_{\pi_0^i}(\{\bar{\theta}^{i}\} \times F^i)=\pi_0^i(F^i).
\end{equation*}
Therefore, $\pi^i|_{A^i}=\pi_0^i|_{A^i}$.  

For any measurable $\tilde{F}^i \subseteq \Theta^i$,  we have 
\begin{equation*}
\begin{aligned}
\{\theta^i \in A^i : (\bar{\theta}^i, \theta^i) \in \tilde{F}^i \times \{\bar{\theta}^i\}\} &= \emptyset,\\[6pt]
\{\theta^i \in \Theta^i : (\theta^i, \bar{\theta}^i) \in \tilde{F}^i \times \{\bar{\theta}^i\}\} &= \tilde{F}^i.
\end{aligned}
\end{equation*}
Therefore,
\begin{equation*} \label{b.1}
\begin{split}
\sigma^i_{\pi^i}\!(\tilde{F}^i \times \{\bar{\theta}^i\})
&= \displaystyle{\int}_{\Theta^i } \sigma^i(\theta^i)\!(\tilde{F}^i\times \{\bar{\theta}^i\})\,\pi^i(d\theta^i)\\
&= \displaystyle{\int}_{\Theta^i \setminus A^i } \sigma^i(\theta^i)\!(\tilde{F}^i\times \{\bar{\theta}^i\})\,\pi^i(d\theta^i)
+ \displaystyle{\int}_{ A^i } \sigma^i(\theta^i)\!(\tilde{F}^i\times \{\bar{\theta}^i\})\,\pi^i(d\theta^i)\\
&=\displaystyle{\int}_{\Theta^i \backslash A^i} \varphi^i(\theta^i)\!(\tilde{F}^i)\,\pi^i(d\theta^i) .
\end{split}
\end{equation*}
Similarly,
\[
\sigma^i_{\pi_0^i}\!(\tilde{F}^i \times \{\bar{\theta}^i\})
=\displaystyle{\int}_{\Theta^i \backslash A^i} \varphi^i(\theta^i)\!(\tilde{F}^i)\,\pi_0^i(d\theta^i).
\]
Since $\sigma^i_{\pi^i}\!(\tilde{F}^i \times \{\bar{\theta}^i\})=\sigma^i_{\pi_0^i}\!(\tilde{F}^i \times \{\bar{\theta}^i\})$, the two displays above imply that  
\begin{equation*}
    \displaystyle{\int}_{\Theta^i \backslash A^i} \varphi^i(\theta^i)\!(\tilde{F}^i)\,\pi^i(d\theta^i)
    =\displaystyle{\int}_{\Theta^i \backslash A^i} \varphi^i(\theta^i)\!(\tilde{F}^i)\,\pi_0^i(d\theta^i).
\end{equation*}

Since $\pi^i|_{A^i}=\pi_0^i|_{A^i}$, we have
\begin{equation*}
\displaystyle{\int}_{A^i} \varphi^i(\theta^i)\!(\tilde{F}^i)\,\pi^i(d\theta^i)
=\displaystyle{\int}_{A^i} \varphi^i(\theta^i)\!(\tilde{F}^i)\,\pi_0^i(d\theta^i).
\end{equation*} 
Therefore,
\begin{equation*}
\displaystyle{\int}_{\Theta^i} \varphi^i(\theta^i)\!(\tilde{F}^i)\,\pi^i(d\theta^i)
=\displaystyle{\int}_{\Theta^i} \varphi^i(\theta^i)\!(\tilde{F}^i)\,\pi_0^i(d\theta^i).
\end{equation*}
Hence, $\varphi^i_{\pi^i}=\varphi^i_{\pi_0^i}$. Therefore, $\pi^i$ belongs to the right-hand side of \eqref{eq:sigmai}.
\end{proof}

\subsection{Proof of Corollary \ref{cor:monotonicrobustlysc}}

\begin{proof} 
If $\varphi^\varepsilon$ that reveals $\varepsilon$-truth about $\varphi$ under $\pi_0$, there exists $ A= \prod\nolimits_{i \in N}\Delta(\Theta^i)\subseteq \Theta$ such that $\pi_0^i(A^i)< \varepsilon$ for all $i\in N$ and 
\begin{equation*}
    \Pi^{\mathcal{D}}(\varphi^\varepsilon, \varphi^\varepsilon_{\pi_0}) 
    =\Pi^{\mathcal{D}}(\varphi, \varphi_{\pi_0}) \cap \left\{\pi \in \prod\nolimits_{j \in N}\Delta(\Theta^j): \pi^i|_{A^i}=\pi_0^i|_{A^i},  i \in N\right\}.
\end{equation*}
By assumption, $\tilde{\varphi} \succsim \varphi$, and hence $\Pi^{\mathcal{D}}(\varphi, \varphi_{\pi_0}) \supseteq \Pi^{\mathcal{D}}(\tilde{\varphi}, \tilde{\varphi}_{\pi_0})$. We have
\begin{equation}\label{eq:ranking}
    \Pi^{\mathcal{D}}(\varphi^\varepsilon, \varphi^\varepsilon_{\pi_0}) \supseteq \Pi^{\mathcal{D}}(\tilde{\varphi},\tilde{\varphi}_{\pi_0}) \cap \left\{\pi \in \prod\nolimits_{j \in N}\Delta(\Theta^j): \pi^i|_{A^i}=\pi_0^i|_{A^i},  i \in N\right\}.
\end{equation}
By Lemma \ref{lemma:perturbedfilter}, there exists a  fictitious direct mechanism $(\mathcal{D},\tilde{\varphi}^A)$ such that 
\begin{equation} \label{eq:represent_A}
\Pi^{\mathcal{D}}(\tilde{\varphi}^A, \tilde{\varphi}^A_{\pi_0})=\Pi^{\mathcal{D}}(\tilde{\varphi},\tilde{\varphi}_{\pi_0}) \cap \left\{\pi \in \prod\nolimits_{j \in N}\Delta(\Theta^j): \pi^i|_{A^i}=\pi_0^i|_{A^i},  i \in N\right\}.
\end{equation} 
It follows from \eqref{eq:ranking} and \eqref{eq:represent_A} that $\Pi^{\mathcal{D}}(\tilde{\varphi}^A, \tilde{\varphi}^A_{\pi_0}) \subseteq \Pi^{\mathcal{D}}(\varphi^\varepsilon, \varphi^\varepsilon_{\pi_0})$. 

Since  $\tilde{\varphi}^A$ reveals $\varepsilon$-truth about  $\tilde{\varphi}$ under $\pi_0$ and by assumption, $(\mathcal{D},\tilde{\varphi})$ is robustly $\pi_0$-self-confirming,  we have that $(\mathcal{D},\tilde{\varphi}^A)$  is   $\pi_0$-self-confirming. It follows from  $\tilde{\varphi}^A \succsim \varphi^\varepsilon$ and Lemma \ref{l3+} that $(\mathcal{D},\varphi^\varepsilon)$ is $\pi_0$-self-confirming. Therefore, $(\mathcal{D}, \varphi)$ is robustly $\pi_0$-self-confirming.
\end{proof}

\subsection{Proof of Corollary \ref{cor:least-informative-filter}}

\begin{proof}
    The ``if'' direction follows from the definition of robustly self-confirming direct mechanism.
    
    To prove the ``only if'' direction,  
  note that  $\mathcal{D}$ is robustly $\pi_0$-self-confirming implies that there exists $\varphi^{\mathcal{D}} \succsim \delta$ such that $(\mathcal{D},\varphi^{\mathcal{D}})$ is robustly $\pi_0$-self-confirming. Therefore, if there exists filter $\varphi$ such that $\varphi \sim \delta$, then $\varphi^{\mathcal{D}} \succsim \varphi$. Hence, by Corollary \ref{cor:monotonicrobustlysc}, $(\mathcal{D}, \varphi)$ is robustly $\pi_0$-self-confirming. 

    When $n=1$, the direct mechanism $\mathcal{D}=(\Theta,\delta)$ together with the truth-telling strategy can be treated as a special augmented mechanism. By the strong form of the fictitious revelation principle, it has a representation of a fictitious direct mechanism $(\mathcal{D},\varphi)$, where $\varphi \sim \delta$.
\end{proof}

\section{Proof of Theorem \ref{t3}}

\begin{proof}
    \textbf{Step 1: } \label{t3step1}We start with the claim that
    \begin{equation}
\label{b5}
\{\pi\in\Delta(\Theta):\delta^{\tilde p}_\pi=\delta^{\tilde p}_{\pi_0}\}
=\bigl\{\pi\in\Delta(\Theta):\pi([p,1])=\pi_0([p,1])\ \text{for }\tilde p\text{-a.e.\ }p\bigr\}.
\end{equation}
 First by Fubini's Theorem, for any $\pi \in \Delta(\Theta), \text{ measurable } E \subseteq O, $ 
 \begin{equation*}
\begin{array}{rcl}
\delta^{\tilde{p}}_\pi(E)
&=& \displaystyle \int_0^1 \delta^{\tilde{p}}(\theta)(E)\,\pi(d\theta) \\[0.8em]
&=& \displaystyle \int_0^1 \int_0^1 \delta^{p}(\theta)(E)\,\tilde{p}(dp)\,\pi(d\theta) \\[0.8em]
&=& \displaystyle \int_0^1 \int_0^1 \delta^{p}(\theta)(E)\,\pi(d\theta)\,\tilde{p}(dp) \\[0.8em]
&=& \displaystyle \int_0^1 \delta^p_\pi(E)\,\tilde{p}(dp).
\end{array}
\end{equation*}

To show (\ref{b5}), we first show the following lemma.
\begin{lemma}
\label{l5}
    $\forall p \in [0,1],$ \begin{equation*} \{\pi \in \Delta(\Theta): \delta^p_{\pi}=\delta^p_{\pi_0}\}=\{\pi \in \Delta(\Theta): \pi([p,1])=\pi_0([p,1])\}, \end{equation*} where $\delta^p$ is the outcome function of posting price $p$.
\end{lemma}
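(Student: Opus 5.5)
The plan is to compute $\delta^p_\pi$ explicitly and read off when two such measures coincide. Fix $p\in[0,1]$ and $\pi\in\Delta(\Theta)$. Since $\delta^p(\theta)$ is the Dirac measure $1_{(1,p)}$ for $\theta\ge p$ and $1_{(0,0)}$ for $\theta<p$, the definition of the average in \eqref{eq:average} gives, for every measurable $E\subseteq O$,
\[
\delta^p_\pi(E)=\int_{[p,1]}1_{(1,p)}(E)\,\pi(d\theta)+\int_{[0,p)}1_{(0,0)}(E)\,\pi(d\theta)
=\pi([p,1])\,1_{(1,p)}(E)+\bigl(1-\pi([p,1])\bigr)1_{(0,0)}(E).
\]
This is the same formula already displayed for $\delta^{p}_{\pi_0}$ in \Cref{sec:application}, now written for an arbitrary $\pi$.

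For the inclusion ``$\supseteq$'', suppose $\pi([p,1])=\pi_0([p,1])$. The representation above shows that $\delta^p_\pi$ depends on $\pi$ only through the number $\pi([p,1])$. Hence $\delta^p_\pi=\delta^p_{\pi_0}$ as measures on $O$.

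For the inclusion ``$\subseteq$'', suppose $\delta^p_\pi=\delta^p_{\pi_0}$. We evaluate both measures on a set that separates the two atoms, namely $E=\{(1,p)\}$. This set contains $(1,p)$ and does not contain $(0,0)$, because the first coordinates differ. Therefore $\delta^p_\pi(E)=\pi([p,1])$ and $\delta^p_{\pi_0}(E)=\pi_0([p,1])$, and these must be equal.

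The one point needing care is the degenerate case $p=0$, where the Dirac measures sit at $(1,0)$ and $(0,0)$. These points are still distinct, and in any case both sides equal $1$ automatically since $\pi([0,1])=1$. So the argument goes through uniformly in $p$, and there is no genuine obstacle. The lemma then feeds into \eqref{b5} through the Fubini representation $\delta^{\tilde p}_\pi=\int\delta^p_\pi\,\tilde p(dp)$.
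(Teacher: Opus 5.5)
Your proof is correct and follows essentially the same route as the paper: you write $\delta^p_\pi=\pi([p,1])1_{(1,p)}+(1-\pi([p,1]))1_{(0,0)}$, so $\pi([p,1])$ determines $\delta^p_\pi$, and for the converse you evaluate both measures on the singleton $\{(1,p)\}$. Your extra remark on $p=0$ is harmless but not needed, since $(1,0)\neq(0,0)$.
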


\begin{proof}
    Recall that \[
\delta^p(\theta) :=
\begin{cases}
1_{(1,p)}, & \text{if } \theta \ge p,\\[6pt]
1_{(0,0)}, & \text{if } \theta < p.
\end{cases}
\] Hence, \[\delta_{\pi}^p=\pi([p,1])1_{(1,p)}+(1-\pi([p,1]))1_{(0,0)}.\]
Therefore, if $\pi([p,1])=\pi_0([p,1])$, then $\delta_{\pi}^p=\delta_{\pi_0}^p.$ Conversely, if  $\delta_{\pi}^p=\delta_{\pi_0}^p,$ consider the outcome $(1,p)\in O$.  We have $\pi([p,1])=\delta^p_\pi(\{(1,p)\})=\delta^p_{\pi_0}(\{(1,p)\})=\pi_0([p,1])$. 
    
\end{proof}

To see the ``$\supseteq$'' direction of (\ref{b5}), suppose $\pi\in \Delta(\Theta)$ is such that $\pi([p,1])=\pi_0([p,1]), \tilde{p}\text{-a.e.}$, then by \Cref{l5}, $\delta^p_\pi=\delta^p_{\pi_0}, \tilde{p}\text{-a.e.}$ Therefore, \begin{equation*}
\delta^{\tilde{p}}_\pi(E)
= \displaystyle \int_0^1 \delta^p_\pi(E)\,\tilde{p}(dp)=\displaystyle \int_0^1 \delta^p_{\pi_0}(E)\,\tilde{p}(dp)=\delta^{\tilde{p}}_{\pi_0}(E).
\end{equation*}

To see the ``$\subseteq$'' direction of (\ref{b5}), for any measurable $B \subseteq [0,1]$,
\begin{equation}
\label{b7}
\begin{aligned}
\int_{B} \pi([p,1])\,\tilde{p}(dp)
&= \int_{0}^{1} \pi([p,1])\, 1_{p}(B)\,\tilde{p}(dp)  \\
&= \int_{0}^{1} \pi([p,1])\, 1_{(1,p)}(\{1\}\times B)\,\tilde{p}(dp) \\
&= \int_{0}^{1} \left( \pi([p,1])\, 1_{(1,p)}(\{1\}\times B)
      + (1-\pi([p,1]))\, 1_{(0,0)}(\{1\}\times B) \right)\tilde{p}(dp) \\
&= \int_{0}^{1} \delta^p_{\pi}(\{1\}\times B)\,\tilde{p}(dp).
\end{aligned}
\end{equation}
Similarly, \begin{equation}
\label{b7-beta}
\begin{aligned}
\int_{B} \pi_0([p,1])\,\tilde{p}(dp)= \int_{0}^{1} \delta^p_{\pi_0}(\{1\}\times B)\,\tilde{p}(dp).
\end{aligned}
\end{equation}
Since $\delta_{\pi}^{\tilde{p}}=\delta_{\pi_0}^{\tilde{p}}$, \eqref{b7} and \eqref{b7-beta} imply that $\int_{B} \pi([p,1])\,\tilde{p}(dp)=\int_{B} \pi_0([p,1])\,\tilde{p}(dp)$ for any measurable $B \subseteq [0,1]$ and hence $\pi([p,1])=\pi_0([p,1])$, $\tilde{p}$-a.e. 

In addition, the following lemma demonstrates what happens when $\pi([p,1]) \neq \pi_0([p,1])$:
\begin{lemma}
\label{lemma:equaltail}
Consider $\pi_0$ atomless. For any $\pi$ that belongs to the right-hand side of \eqref{b5}, and for any $p \in \supp \tilde{p}$, 
\begin{equation*}
    \label{eq:equaltail}
    \pi([p,1])\geq \pi_0([p,1]).
\end{equation*}
\end{lemma}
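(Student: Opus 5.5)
The plan is a one-sided approximation argument using only the monotonicity of tail probabilities and the atomlessness of $\pi_0$. Fix $\pi$ in the right-hand side of \eqref{b5} and $p \in \supp \tilde p$. Let
\[
S:=\{q\in[0,1]:\pi([q,1])=\pi_0([q,1])\}.
\]
By \eqref{b5}, $\tilde p(S)=1$.

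\textbf{Step 1: find points of $S$ accumulating at $p$.} Since $p\in\supp\tilde p$, every interval $(p-1/k,p+1/k)\cap[0,1]$ has positive $\tilde p$-measure. It therefore meets $S$, because $S$ has full $\tilde p$-measure. This gives a sequence $q_k\in S$ with $q_k\to p$. Passing to a subsequence, one of three cases holds: $q_k=p$ for some $k$, or $q_k\uparrow p$ with $q_k<p$, or $q_k\downarrow p$ with $q_k>p$.

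\textbf{Step 2: the first two cases.} If some $q_k=p$, then $p\in S$ and the inequality holds with equality. If $q_k\uparrow p$ with $q_k<p$, then $\bigcap_k[q_k,1]=[p,1]$. Continuity from above of measures gives
\[
\pi([q_k,1])\to\pi([p,1]),\qquad \pi_0([q_k,1])\to\pi_0([p,1]).
\]
Since the two sequences coincide for each $k$, we get $\pi([p,1])=\pi_0([p,1])$.

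\textbf{Step 3: approach from the right.} This is the only case where equality can fail, and it is where atomlessness is used. If $q_k\downarrow p$ with $q_k>p$, then $\bigcup_k[q_k,1]=(p,1]$. Continuity from below gives
\[
\pi([q_k,1])\to\pi((p,1]),\qquad \pi_0([q_k,1])\to\pi_0((p,1]).
\]
Because $\pi_0$ is atomless, $\pi_0((p,1])=\pi_0([p,1])$. Since $q_k\in S$, the two limits coincide, so
\[
\pi([p,1])\;\ge\;\pi((p,1])\;=\;\pi_0((p,1])\;=\;\pi_0([p,1]),
\]
which is the desired inequality.

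I do not expect a real obstacle. The only care needed is to recognize that $\pi$ may have an atom at $p$, which is exactly why only an inequality, and not equality, is available in Step 3. Atomlessness of $\pi_0$ is what lets the right-hand limit identify $\pi_0([p,1])$.
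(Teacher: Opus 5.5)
Your proof is correct and follows essentially the same route as the paper's. You take points of the full-$\tilde p$-measure equality set converging to $p$, get equality by continuity from above when they approach from the left, and get $\pi((p,1])=\pi_0((p,1])=\pi_0([p,1])$ via atomlessness when they approach from the right. The only difference is cosmetic: the paper first treats $\tilde p(\{p\})>0$ separately, while you absorb that case into $q_k=p$.
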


\begin{proof}
   The statement is true if $\tilde{p}(\{p\})>0$, in particular, $\pi([p,1])=\pi_0([p,1])$ when $\tilde{p}(\{p\})>0$.  
   
   If $\tilde{p}(\{p\})=0$, then there exists $\{p^n\}_{n \in \mathbb{N}} \subseteq \supp{\tilde{p}}$ such that 
   \begin{equation*} p^n \rightarrow p \text{ and } \pi([p^n,1])=\pi_0([p^n,1]), \forall n. \end{equation*}
   If $p^n<p$ for all large $n$ (take a subsequence if necessary), then $\pi([p^n,1]) \rightarrow \pi([p,1])$ and $\pi_0([p^n,1]) \rightarrow \pi_0([p,1])$ so $\pi([p,1]) = \pi_0([p,1])$. If $p^n>p$ for all large $n$ (take a subsequence if necessary), then $\pi([p^n,1]) \rightarrow \pi((p,1])$ and $\pi_0([p^n,1]) \rightarrow \pi_0([p,1])$. Therefore, $\pi([p,1])\geq \pi((p,1])=\pi_0((p,1])=\pi_0([p,1])$, as $\pi_0$ is atomless. 
   \end{proof} 
     

\textbf{Step 2: } We start with the part (i) of the ``only if'' direction. Given $\pi_0 \in \Delta(\Theta)$, we want to show that if there exists $p_1,p_2 \in \supp \tilde{p}$ such that $p_1 \cdot \pi_0([p_1,1]) \neq p_2 \cdot \pi_0([p_2,1])$, then $ \mathcal{D}^{\tilde{p}}$ is not robustly self-confirming, which is equivalent to $( \mathcal{D}^{\tilde{p}}, \varphi^{\tilde{p}})$ is not robustly self-confirming for any filter $\varphi^{\tilde{p}} \sim \delta^{\tilde{p}}$, i.e.,
    \begin{equation*}
    \label{b4}
        \Pi^{\mathcal{D}^{\tilde{p}}}(\varphi^{\tilde{p}}, \varphi^{\tilde{p}}_{\pi_0})=\{\pi \in \Delta(\Theta): \delta^{\tilde{p}}_{\pi}=\delta^{\tilde{p}}_{\pi_0}\}. 
    \end{equation*}

Without loss of generality, assume that $p_1 \cdot \pi_0([p_1,1])<p_2 \cdot \pi_0([p_2,1])$. Since $\pi_0$ is atomless, there exists $\alpha>0$ such that for any $p_1'\in(p_1-\alpha,p_1+\alpha)$, we have
\begin{equation} \label{thm1-3}  p_1' \pi_0([p_1',1])<p_2 \pi_0([p_2,1]). \end{equation} In addition, as $p_1 \in \supp \tilde{p}, $ \begin{equation} \label{thm1-4} \tilde{p}((p_1-\alpha,p_1+\alpha))>0. \end{equation} 

Hence, for any $\pi \in \Pi^{\mathcal{D}^{\tilde{p}}}(\varphi^{\tilde{p}}, \varphi^{\tilde{p}}_{\pi_0}),$
\begin{equation*}
\begin{array}{rcl}
V^0(\mathcal{D}^{\tilde{p}},\pi)
&=& \displaystyle \int_0^1 p\pi([p,1])\tilde{p}(dp) \\[0.8em]
&=& \displaystyle \int_0^1 p\pi_0([p,1])\tilde{p}(dp) \\[0.8em]
&<& \displaystyle \max_{p \in \supp \tilde{p}}  p  \pi_0([p,1]),
\end{array}
\end{equation*}

where the strict inequality is by (\ref{thm1-3}) and (\ref{thm1-4}), and the maximization is well-defined, as $\supp \tilde{p}$ is compact, and the mapping $p \mapsto p \cdot \pi_0([p,1])$ is continuous, as $\pi_0$ is atomless. 

Let $p^* \in \arg\max  \limits_{p \in \supp \tilde{p}} p \cdot \pi_0([p,1])$. By Lemma \ref{lemma:equaltail}, for any $\pi \in \Pi^{\mathcal{D}^{\tilde{p}}}(\varphi^{\tilde{p}}, \varphi^{\tilde{p}}_{\pi_0}),$ \begin{equation*} \pi([p^*,1]) \geq \pi_0([p^*,1]). \end{equation*}
Hence, 
\begin{equation*} V^0({\mathcal{D}^{\tilde{p}}},\pi)< p^*   \pi_0([p^*,1]) \leq p^*  \pi([p^*,1])=V({\mathcal{D}^{p^*}},\pi). \end{equation*}

As a result, $(\mathcal{D}^{\tilde{p}}, \varphi^{\tilde{p}})$ is not self-confirming, which is not robustly self-confirming. 

\textbf{Step 3: } We proceed to part (ii) of the ``only if'' direction.  We want to show that if the first condition in \Cref{t3} is satisfied by $\tilde{p}$, but there exists $p_* \in \supp \tilde{p}$ such that $p_*$ is not a local maximizer of $p\pi_0([p,1])$, then $({\mathcal{D}^{\tilde{p}}}, \varphi^{\tilde{p}})$ is still not robustly self-confirming. 

If $p_*$ is not a local maximizer of $p   \pi_0([p,1])$, then for any $\varepsilon>0$, there exists $\beta>0$ such that $\pi_0([p_*-\beta,p_*))<\varepsilon, \pi_0((p_*, p_*+\beta])<\varepsilon$, and either 
\begin{enumerate}
\item $p \pi_0([p,1])>p_* \pi_0([p_*,1])$ for all $p \in [p_*-\beta,p_*)$, or  \label{case1}
\item $p \cdot \pi_0([p,1])>p_*\pi_0([p_*,1])$ for all $p \in (p_*, p_*+\beta].$ \label{case2}
\end{enumerate}

If case \ref{case1},  let $A=[p_*-\beta,p_*)$ and by \Cref{lemma:perturbedfilter}, there exists a filter $\varphi^A$ such that 
\[       \Pi^{\mathcal{D}^{\tilde{p}}}(\varphi^A, \varphi^A_{\pi_0}) =\Pi^{\mathcal{D}^{\tilde{p}}}(\varphi^{\tilde{p}}, \varphi^{\tilde{p}}_{\pi_0}) \cap \{\pi \in \Delta(\Theta): \pi|_{A}=\pi_0|_A\}. \]
By the first condition in \Cref{t3}, for any $\pi \in      \Pi^{\mathcal{D}^{\tilde{p}}}(\varphi^A, \varphi^A_{\pi_0}) $ and $ p \in [p_*-\beta, p_*)$, we have $\pi([p,p_*))=\pi_0([p,p_*))$ and
\begin{equation*}  
V^0(\mathcal{D}^{\tilde{p}},\pi)=V^0(\mathcal{D}^{\tilde{p}},\pi_0)=V^0(\mathcal{D}^{p_*},\pi_0)<V^0(\mathcal{D}^p,\pi_0) \leq V^0(\mathcal{D}^p,\pi), \end{equation*}
where the first equality is by \eqref{b5}, the second equality is by the first condition in \Cref{t3}, the first inequality is by assumption, and the second inequality follows the following argument: \begin{equation*}
\begin{array}{rcl}
\pi([p,1])
&=& \pi([p,p_*)) + \pi([p_*,1]) \\[0.6em]
&\ge& \pi_0([p,p_*)) + \pi_0([p_*,1]) \\[0.6em]
&=& \pi_0([p,1]).
\end{array}
\end{equation*}
Therefore, $({\mathcal{D}^{\tilde{p}}}, \varphi^{\tilde{p}})$  is not robustly self-confirming. Case \ref{case2} is symmetric.

  \textbf{Step 4: } We shall show the ``if'' direction. If $\tilde{p}$ satisfies the two conditions in \Cref{t3}, then for any $p \in \supp \tilde{p}$, there exists $\varepsilon_p>0$ such that 
  \begin{equation} \label{b8}  0< {\theta} \pi_0([{\theta},1]) \leq p  \pi_0([p,1]), \text{ for all } \theta \in  (p-\varepsilon_p,p+\varepsilon_p),
  \end{equation} where the inequality ``$>0$'' is because $\pi_0$ has a full support. 
  Let $\underline{p}:= \min \supp \tilde{p}$ and $\bar{p}:=\max \supp \tilde{p}$. Then $0<\underline{p}\leq \bar{p}<1$. 
  
\textbf{Step 4.1: }Suppose for all $ p \in [\underline{p}, \bar{p}], $ we have
  \begin{equation} \label{eq:localoptimum1} p   \pi_0([p,1]) \leq V^0(\mathcal{D}^{\tilde{p}}, \pi_0)=\underline{p}   \pi_0([\underline{p},1])=\bar{p} \cdot \pi_0([\bar{p},1]), \end{equation}
where the last two equalities are by the first condition in \Cref{t3}.\footnote{It is worth mentioning that \eqref{eq:localoptimum1} includes the case where $\underline{p}=\bar{p}$.} 

Let \begin{equation}\label{eq:epsilonstar}
\varepsilon^*
:= \frac{1}{2}\,
\min\Big\{
    \pi_0([\underline{p}-\varepsilon_{\underline{p}},\,\underline{p}-\varepsilon_{\underline{p}}/2]),
    \ \pi_0([\bar{p},\,\bar{p}+\varepsilon_{\bar{p}}]),
    \ \underline{p}\,\pi_0([\underline{p},1]),
    \ \pi_0([0,\;\underline{p}\,\pi_0([\underline{p},1])])
\Big\}.
\end{equation}
It is readily verified that $\varepsilon^*>0$.  Without loss of generality, it is valid to have $\varepsilon_{\bar{p}}$ and $\varepsilon_{\underline{p}}$ such that both \eqref{b8} and the following holds:
 \[
0<\underbrace{\underline{p}\,\pi_0([\underline{p},1])}_{\theta_1}
\;<\;
\underbrace{\underline{p}-\varepsilon_{\underline{p}}}_{\theta_2}
\;<\;
\underbrace{\underline{p}-\varepsilon_{\underline{p}}/2}_{\theta_3}
\;<\;
\underbrace{\bar{p}}_{\theta_4}
\;<\;
\underbrace{\bar{p}+\varepsilon_{\bar{p}}}_{\theta_5}<1.
\]
Consider any $\varphi^{\tilde{p}, \varepsilon^*}$ that reveals $\varepsilon^*$-truth about $\varphi^{\tilde{p}}$, i.e., there exists $A\subseteq \Theta$ with $\pi_0(A)<\varepsilon^*$ such that 
\begin{equation*}\Pi^{\mathcal{D}^{\tilde{p}}}(\varphi^{\tilde{p}, \varepsilon^*}, \varphi^{\tilde{p}, \varepsilon^*}_{\pi_0}) =\Pi^{\mathcal{D}^{\tilde{p}}}(\varphi^{\tilde{p}}, \varphi^{\tilde{p}}_{\pi_0}) \cap \{\pi \in \Delta(\Theta): \pi|_{A}=\pi_0|_A\}. 
\end{equation*}
 Let $\pi_{\varepsilon^*}\in \Pi^{\mathcal{D}^{\tilde{p}}}(\varphi^{\tilde{p}, \varepsilon^*}, \varphi_{\pi_0}^{\tilde{p}, \varepsilon^*})$ be such that the following conditions hold:
 \begin{enumerate}
 \item $\pi_{\varepsilon^*}$ agrees with $\pi_0$ on $A$ and $[\theta_3,\theta_4]$: \[\pi_{\varepsilon^*}|_{A \cup [\theta_3,\theta_4]}=\pi_0|_{A \cup [\theta_3,\theta_4]}.\]
\item $\pi_{\varepsilon^*}$ vanishes on $(\theta_1,\theta_3)$ and $(\theta_5,1]$ except on $A$: \[ \pi_{\varepsilon^*}|_{(\theta_1,\theta_3) \cup (\theta_5, 1] \backslash A}= 0.\]
\item $\pi_{\varepsilon^*}$ assigns at least as much measure as $\pi_0$ on $(\theta_4,\theta_5]$:
\begin{equation*}
    \pi_{\varepsilon^*}(E)  \ge  \pi_0(E) 
   \text{ for all } E \subseteq (\theta_4,\theta_5].
\end{equation*}
\item $\pi_{\varepsilon^*}$ shares with $\pi_0$ the same total measure on $[0,\theta_3]$ and on $[\theta_4,1]$, respectively:  
\begin{equation*}
\begin{aligned}
 &\pi_{\varepsilon^*}([0, \theta_3])=\pi_0([0,\theta_3]),\\
& \pi_{\varepsilon^*}([\theta_4,1]) = \pi_0([\theta_4,1]).
\end{aligned}
\end{equation*}
\end{enumerate}

We consider several cases. 
\begin{itemize}
    \item $ p \in [0, \theta_1]$: 
\[ p  \pi_{\varepsilon^*}([p,1]) \leq p \leq \theta_1 = \underline{p} \pi_0([\underline{p},1])=V^0(\mathcal{D}^{\tilde{p}}, \pi_0)=V^0(\mathcal{D}^{\tilde{p}}, \pi_{\varepsilon^*}),\] where the second to last equality is by the first condition in \Cref{t3} and the last equality is by \eqref{b5}.
    \item $p \in (\theta_1, \theta_2]$: 
\begin{equation*}
\label{ineq-chain}
\begin{aligned}
p\,\pi_{\varepsilon^*}([p,1])
&\;\le\;
\theta_2\,\pi_{\varepsilon^*}((\theta_1,1])
\\[0.3em]
&\;=\;
\theta_2\!\left[\,
\pi_{\varepsilon^*}\bigl((\theta_1,\theta_3)\bigr)
\;+\;
\pi_{\varepsilon^*}([\theta_3,1])
\right]
\\[0.3em]
&\;=\;
\theta_2\!\left[\,
\pi_{\varepsilon^*}\bigl(A\cap(\theta_1,\theta_3)\bigr)
\;+\;
\pi_{\varepsilon^*}([\theta_3,1])
\right]
\\[0.3em]
 &\;\leq\;
\theta_2\!\left[\,
{\varepsilon^*}
\;+\;
\pi_{0}([\theta_3,1])
\right]
\\[0.3em]
 &\;\leq\;
\theta_2\!\left[\,
{\pi_0([\theta_2,\theta_3])}
\;+\;
\pi_{0}([\theta_3,1])
\right]
\\[0.3em]
 &\;=\;
\theta_2\,
\pi_{0}\bigl(
[\theta_2,1]
\bigr)
\\[0.3em]
&\;=\;
(\underline{p}-\varepsilon_{\underline{p}})\,
\pi_{0}\bigl(
[\underline{p}-\varepsilon_{\underline{p}},1]
\bigr)
\\[0.3em]
&\;\le\;
V^{0}(\mathcal{D}^{\tilde{p}},\pi_{\varepsilon^*}) .
\end{aligned}
\end{equation*}
    \item $p \in (\theta_2, \theta_3)$: by \eqref{b8},  \[p  \pi_{\varepsilon^*}([p,1]) \leq p \cdot \pi_0([p,1]) \leq V^0(\mathcal{D}^{\tilde{p}}, \pi_{\varepsilon^*}).\]
    \item $p \in [\theta_3, \theta_4]$: by \eqref{b8} and \eqref{eq:localoptimum1},
\[p  \pi_{\varepsilon^*}([p,1]) = p \pi_0([p,1]) \leq V^0(\mathcal{D}^{\tilde{p}}, \pi_{\varepsilon^*}).\] 
    \item $ p \in (\theta_4, \theta_5]$: by \eqref{b8},
\begin{equation*}
\begin{aligned}
p\,\pi_{\varepsilon^*}([p,1])
&= p\bigl[\pi_{\varepsilon^*}([\theta_4,1]) - \pi_{\varepsilon^*}([\theta_4,p))\bigr] \\
 &\le  p\bigl[\pi_{0}([\theta_4,1]) - \pi_{0}([\theta_4,p))\bigr] \\
&\le p\,\pi_0([p,1]) \\
&\le V^0(\mathcal{D}^{\tilde{p}}, \pi_{\varepsilon^*}).
\end{aligned}
\end{equation*}

    \item $p \in (\theta_5,1]$: by \eqref{eq:epsilonstar},
    \begin{equation*}
\label{ineq-display}
p\,\pi_{\varepsilon^*}([p,1])
\le
\pi_{\varepsilon^*}([\theta_5,1])
\le
\pi_{\varepsilon^*}(A)
\le
\varepsilon^*
\le
\underline{p}\,\pi_0([\underline{p},1])
=
V^0(\mathcal{D}^{\tilde{p}}, \pi_{\varepsilon^*}) .
\end{equation*}
\end{itemize}
Hence, $(\mathcal{D}^{\tilde{p}}, \varphi^{\tilde{p}, \varepsilon^*})$ is self-confirming for any $\varphi^{\tilde{p}, \varepsilon^*}$ that reveals $\varepsilon^*$-truth about $\varphi^{\tilde{p}}$, which implies that $(\mathcal{D}^{\tilde{p}}, \varphi^{\tilde{p}})$ is robustly self-confirming for any $\tilde{p}$ that satisfies the two conditions in \Cref{t3} and \eqref{eq:localoptimum1}. 

\textbf{Step 4.2: }Suppose \eqref{eq:localoptimum1} is not satisfied. Then by assumption there exists a subinterval $[x, y] \subseteq [\underline{p}, \bar{p}]$ such that  
\begin{equation} \label{eq:uphill} p\pi_0([p,1]) \leq V^0(\mathcal{D}^{\tilde{p}}, \pi_{0}) \end{equation} 
for all $p\in [x, y]$. Furthermore, because $\pi_0$ is atomless, we can take the interval to be maximal such that no other subinterval $[\underline{p}, \bar{p}]\supseteq [x',y']\supsetneq [x,y]$ satisfies this property. Again, it follows from the fact that $\pi_0$ is atomless that
\begin{equation*}
x\pi_0([x,1])= y\pi_0([ y,1])=V^0(\mathcal{D}^{\tilde{p}}, \pi_{0}).
\end{equation*}
Because $p\pi_0([p,1])$ is analytic, there can be only finitely many of such intervals.   Let  $\mathcal{I}=\{[x^k, y^k]\}_{k=1}^m$ be the collection of all such intervals; by definition, these intervals are pairwise disjoint. By definition, $\supp \tilde{p}  \subseteq \bigcup_{k=1}^m [ x^k,y^k]$. 
For $k \in \{1,...,m\}$, if $\supp \tilde{p} \cap [x^k, y^k] $ is non-empty, define
\begin{equation*}
\begin{split}
    \underline{p}_k:= \min \supp \tilde{p} \cap [x^k,y^k], \\
    \bar{p}_k:= \max \supp \tilde{p} \cap [x^k,y^k],
    \end{split}
\end{equation*}
With some abuse of notation, we denote this collection as $\mathcal{J}=\{[ \underline{p}_k,\bar{p}_k]\}_{k=1}^l$. By definition, $\supp \tilde{p}  \subseteq \bigcup_{k=1}^l [ \underline{p}_k,\bar{p}_k]$ and
\begin{equation*}
   0< \underline{p}=\underline{p}_1 \leq \bar{p}_1<\underline{p}_2 \leq \bar{p}_2<...<\underline{p}_l \leq \bar{p}_l=\bar{p}<1.
\end{equation*}
The argument below replicate the argument for each of the following intervals $[0,\bar{p}_1],$ $[\bar{p}_1,\bar{p}_2]$,\dots, $[\bar{p}_{l-1},1]$

For any $j \in \{1,...,l\}$, let $\varepsilon_{\underline{p}_j}, \varepsilon_{\bar{p}_j}>0$ satisfy both \eqref{b8} and 
\begin{equation}
\label{eq:epsilonpj}
  \underline{p} \pi_0([\underline{p},1])<  \underline{p}_{1}-\varepsilon_{\underline{p}_{1}}<\bar{p}_{1}+\varepsilon_{\bar{p}_{1}}<\underline{p}_{2}-\varepsilon_{\underline{p}_{2}}<\bar{p}_{2}+\varepsilon_{\bar{p}_{2}}<\dots<\underline{p}_{l}-\varepsilon_{\underline{p}_{l}}<\bar{p}_{l}+\varepsilon_{\bar{p}_{l}}<1.
\end{equation}

Let 
\begin{equation}\label{eq:epsilondoublestar}
\varepsilon^{**}=\frac{1}{2}\min\!\left\{\min_{j=1,\dots,l}\pi_0([\underline{p}_j-\varepsilon_{\underline{p}_j},\,\underline{p}_j-\tfrac{\varepsilon_{\underline{p}_j}}{2}]),\;\min_{j=1,\dots,l}\pi_0([\bar{p}_j,\,\bar{p}_j+\varepsilon_{\bar{p}_j}]),\;\underline{p}\,\pi_0([\underline{p},1]),\;\pi_0([0,\,\underline{p}\,\pi_0([\underline{p},1])])\right\}.
\end{equation}

Consider any $\varphi^{\tilde{p}, \varepsilon^{**}}$ that reveals $\varepsilon^{**}$-truth about $\varphi^{\tilde{p}}$, i.e., there exists $A\subseteq \Theta$ with $\pi_0(A)<\varepsilon^{**}$ such that \begin{equation*}\Pi^{\mathcal{D}^{\tilde{p}}}(\varphi^{\tilde{p}, \varepsilon^{**}}, \varphi^{\tilde{p}, \varepsilon^{**}}_{\pi_0}) =\Pi^{\mathcal{D}^{\tilde{p}}}(\varphi^{\tilde{p}}, \varphi^{\tilde{p}}_{\pi_0}) \cap \{\pi \in \Delta(\Theta): \pi|_{A}=\pi_0|_A\}. \end{equation*}

Let $\pi_{\varepsilon^{**}} \in \Pi^{\mathcal{D}^{\tilde{p}}}(\varphi^{\tilde{p}, \varepsilon^{**}}, \varphi_{\pi_0}^{\tilde{p}, \varepsilon^{**}})$ be such that the following conditions hold:

\begin{enumerate}
    \item $\pi_{\varepsilon^{**}}$ agrees with $\pi_0$ on $A$ and $\bigcup_{j=1}^l [\underline{p}_j-\frac{\varepsilon_{\underline{p}_j}}{2}, \bar{p}_j]$: \[\pi_{\varepsilon^*}|_{A \bigcup \cup_{j=1}^l [\underline{p}_j-\frac{\varepsilon_{\underline{p}_j}}{2}, \bar{p}_j]}=\pi_0|_{A \bigcup \cup_{j=1}^l [\underline{p}_j-\frac{\varepsilon_{\underline{p}_j}}{2}, \bar{p}_j]}.\]
    \item $\pi_{\varepsilon^{**}}$ vanishes on $\bigcup_{j=1}^{l+1} [\bar{p}_{j-1}+\varepsilon_{\bar{p}_{j-1}}, \underline{p}_j-\frac{\varepsilon_{\underline{p}_j}}{2})$ except on $A$: \[ \pi_{\varepsilon^{**}}|_{ \cup_{j=1}^{l+1} [\bar{p}_{j-1}+\varepsilon_{\bar{p}_{j-1}}, \underline{p}_j-\frac{\varepsilon_{\underline{p}_j}}{2}) \backslash A}= 0,\]
    where $\bar{p}_0+\varepsilon_{\bar{p}_0}$ is understood to be $\underline{p}\pi_0([\underline{p},1])$, $\underline{p}_{l+1}-\frac{\varepsilon_{\underline{p}_{l+1}}}{2}$ is understood to be 1, and each $[\bar{p}_{j-1}+\varepsilon_{\bar{p}_{j-1}}, \underline{p}_j-\frac{\varepsilon_{\underline{p}_j}}{2})$ is well-defined by \eqref{eq:epsilonpj}.
    \item For any $j \in \{1,...,l\}, \pi_{\varepsilon^{**}}$ assigns at least as much measure as $\pi_0$ on $(\bar{p}_j, \bar{p}_j+\varepsilon_{\bar{p}_j})$:
    \begin{equation*}
    \pi_{\varepsilon^{**}}(E)  \ge  \pi_0(E) 
   \text{ for all } E \subseteq (\bar{p}_j, \bar{p}_j+\varepsilon_{\bar{p}_j}), \forall j \in \{1,...,l\}.
\end{equation*}
\item For any $j \in \{1,...,l-1\}$, $\pi_{\varepsilon^{**}}$ shares with $\pi_0$ the same total measure on $[0, \underline{p}_1], [\bar{p}_l,1]$, and $[\bar{p}_j, \underline{p}_{j+1}]$:
\begin{equation*}
\begin{aligned}
 &\pi_{\varepsilon^{**}}([0, \underline{p}_1])=\pi_0([0,\underline{p}_1]),\\
& \pi_{\varepsilon^{**}}([\bar{p}_l,1]) = \pi_0([\bar{p}_l,1]), \\
& \pi_{\varepsilon^{**}}([\bar{p}_j, \underline{p}_{j+1}]=\pi_0([\bar{p}_j, \underline{p}_{j+1}], \forall j \in \{1,...,l-1\}.
\end{aligned}
\end{equation*}
\end{enumerate}

We discuss several cases. 

\begin{itemize}
    \item $p \in [0, \underline{p} \cdot \pi_0([\underline{p},1])] $:
    \[ p  \pi_{\varepsilon^{**}}([p,1]) \leq p \leq  \underline{p} \pi_0([\underline{p},1])=V^0(\mathcal{D}^{\tilde{p}}, \pi_0)=V^0(\mathcal{D}^{\tilde{p}}, \pi_{\varepsilon^{**}}),\] where the second to last equality is by the first condition in \Cref{t3} and the last equality is by \eqref{b5}.
    \item For any $j \in \{1,...,l\}, p \in (\bar{p}_{j-1}+\varepsilon_{\bar{p}_{j-1}}, \underline{p}_j-\varepsilon_{\underline{p}_j}]$:
    \begin{equation*}
    \begin{aligned}
p\,\pi_{\varepsilon^{**}}([p,1])
&\;\le\;
(\underline{p}_j-\varepsilon_{\underline{p}_j})\,\pi_{\varepsilon^{**}}((\bar{p}_{j-1}+\varepsilon_{\bar{p}_{j-1}},1])
\\[0.3em]
&\;=\;
(\underline{p}_j-\varepsilon_{\underline{p}_j})\!\left[\,
\pi_{\varepsilon^{**}}\bigl((\bar{p}_{j-1}+\varepsilon_{\bar{p}_{j-1}},\underline{p}_j-\frac{\varepsilon_{\underline{p}_j}}{2})\bigr)
\;+\;
\pi_{\varepsilon^{**}}([\underline{p}_j-\frac{\varepsilon_{\underline{p}_j}}{2},1])
\right]
\\[0.3em]
&\;=\;
(\underline{p}_j-\varepsilon_{\underline{p}_j})\!\left[\,
\pi_{\varepsilon^{**}}\bigl(A\cap(\bar{p}_{j-1}+\varepsilon_{\bar{p}_{j-1}},\underline{p}_j-\frac{\varepsilon_{\underline{p}_j}}{2})\bigr)
\;+\;
\pi_{\varepsilon^{**}}([\underline{p}_j-\frac{\varepsilon_{\underline{p}_j}}{2},1])
\right]
\\[0.3em]
 &\;\leq\;
(\underline{p}_j-\varepsilon_{\underline{p}_j})\!\left[\,
{\varepsilon^{**}}
\;+\;
\pi_{0}([\underline{p}_j-\frac{\varepsilon_{\underline{p}_j}}{2},1])
\right]
\\[0.3em]
 &\;\leq\;
(\underline{p}_j-\varepsilon_{\underline{p}_j})\!\left[\,
{\pi_0([\underline{p}_j-\varepsilon_{\underline{p}_j},\underline{p}_j-\frac{\varepsilon_{\underline{p}_j}}{2}])}
\;+\;
\pi_{0}([\underline{p}_j-\frac{\varepsilon_{\underline{p}_j}}{2},1])
\right]
\\[0.3em]
&\;=\;
(\underline{p}_j-\varepsilon_{\underline{p}_j})\,
\pi_{0}\bigl(
[\underline{p}_j-\varepsilon_{\underline{p}_j},1]
\bigr)
\\[0.3em]
&\;\le\;
V^{0}(\mathcal{D}^{\tilde{p}},\pi_{\varepsilon^{**}}),
\end{aligned}
\end{equation*} 
by \eqref{b8} and \eqref{eq:epsilondoublestar}.
    \item For any $j \in \{1,...,l\}, p \in (\underline{p}_j-\varepsilon_{\underline{p}_j}, \underline{p}_j-\dfrac{\varepsilon_{\underline{p}_j}}{2}]$: by \eqref{b8}, \[p \cdot \pi_{\varepsilon^{**}}([p,1]) \leq p \cdot \pi_0([p,1]) \leq V^0(\mathcal{D}^{\tilde{p}}, \pi_{\varepsilon^{**}}).\] 
    \item For any $j \in \{1,...,l\}, p \in [\underline{p}_j-\dfrac{\varepsilon_{\underline{p}_j}}{2}, \bar{p}_j]$: by \eqref{b8} and \eqref{eq:uphill}, \[ p \cdot \pi_{\varepsilon^{**}}([p,1])=p \cdot \pi_0([p,1]) \leq V^0(\mathcal{D}^{\tilde{p}}, \pi_{\varepsilon^{**}}). \] 
    \item For any $j \in \{1,...,l\}, p \in (\bar{p}_j, \bar{p}_j+\varepsilon_{\bar{p}_j}]$: by \eqref{b8}, \begin{equation*}
\begin{aligned}
p\,\pi_{\varepsilon^{**}}([p,1])
&= p\bigl[\pi_{\varepsilon^{**}}([\bar{p}_j,1]) - \pi_{\varepsilon^{**}}([\bar{p}_j,p))\bigr] \\
 &\le  p\bigl[\pi_{0}([\bar{p}_j,1]) - \pi_{0}([\bar{p}_j,p))\bigr] \\
&\le p\,\pi_0([p,1]), \\
&\le V^0(\mathcal{D}^{\tilde{p}}, \pi_{\varepsilon^{**}}).
\end{aligned}
\end{equation*} 
    \item $p \in (\bar{p}_l+\varepsilon_{\bar{p}_l}, 1]$: by \eqref{eq:epsilondoublestar}, \[p \cdot \pi_{\varepsilon^{**}}([p,1]) \leq \varepsilon^{**} \leq V^0(\mathcal{D}^{\tilde{p}}, \pi_{\varepsilon^{**}}). \] 
\end{itemize}
Hence, $(\mathcal{D}^{\tilde{p}}, \varphi^{\tilde{p}, \varepsilon^{**}})$ is self-confirming for any $\varphi^{\tilde{p}, \varepsilon^{**}}$ that reveals $\varepsilon^{**}$-truth about $\varphi^{\tilde{p}}$.

In conclusion, if $\tilde{p}$ satisfies the two conditions in \Cref{t3}, then $(\mathcal{D}^{\tilde{p}}, \varphi^{\tilde{p}})$ is robustly self-confirming, which is equivalent to $\tilde{p}$ is robustly self-confirming. 
\end{proof}

\printbibliography[heading=bibintoc,title={References},resetnumbers=true]
\end{document}